\documentclass{amsart}%
\usepackage{amsfonts}
\usepackage{amsmath}
\usepackage{amssymb}
\usepackage{graphicx}%
\setcounter{MaxMatrixCols}{30}
\newtheorem{theorem}{Theorem}
\theoremstyle{plain}

\newtheorem{definition}{Definition}

\newtheorem{lemma}{Lemma}

\newtheorem{proposition}{Proposition}
\newtheorem{remark}{Remark}

\numberwithin{equation}{section}
\ifx\pdfoutput\relax\let\pdfoutput=\undefined\fi
\newcount\msipdfoutput
\ifx\pdfoutput\undefined\else
\ifcase\pdfoutput\else
\msipdfoutput=1
\ifx\paperwidth\undefined\else
\ifdim\paperheight=0pt\relax\else\pdfpageheight\paperheight\fi
\ifdim\paperwidth=0pt\relax\else\pdfpagewidth\paperwidth\fi
\fi\fi\fi
\begin{document}
\title[Ultrametric Diffusion and Exponential Landscapes]{Ultrametric Diffusion, Exponential Landscapes, and the First Passage Time Problem}
\author{Anselmo Torresblanca-Badillo}
\address{Departamento de Matem\'{a}ticas y Est\'{a}disticaUniversidad del Norte,
Barranquilla, Colombia}
\email{atorresblanca@uninorte.edu.co}
\author{W. A. Z\'{u}\~{n}iga-Galindo}
\address{Centro de Investigaci\'{o}n y de Estudios Avanzados del Instituto
Polit\'{e}cnico Nacional\\
Departamento de Matem\'{a}ticas, Unidad Quer\'{e}taro\\
Libramiento Norponiente \#2000, Fracc. Real de Juriquilla. Santiago de
Quer\'{e}taro, Qro. 76230\\
M\'{e}xico.}
\email{wazuniga@math.cinvestav.edu.mx}
\thanks{The second author was partially supported by Conacyt Grant No. 250845.}
\subjclass[2000]{Primary 60J25, 82C41; Secondary 46S10}
\keywords{Markov processes, ultradiffusion, relaxation of complex systems, the first
passage time problem, $p$-adic  analysis.}

\begin{abstract}
In this article we study certain ultradiffusion equations connected with
energy landscapes of exponential type. These equations are connected with the
$p$-adic models of complex systems introduced by Avetisov et al. We show that
the fundamental solutions of these equations are transition density functions
of L\'{e}vy processes with state space $\mathbb{Q}_{p}^{n}$, we also
study\ some aspects of these processes including the first passage time problem.

\end{abstract}
\maketitle

\section{Introduction}

Stochastic processes on ultrametric spaces have received a lot attention in
the latest years due to their connections with models of complex systems, see
e.g. \cite{A-K3}, \cite{Av-2}-\cite{Av-7}, \cite{Casas-Zuniga}, \cite{Ch-Z-2}%
-\cite{Ch-Z-1}, \cite{Chen-Kumagi}, \cite{Dra-Kh-K-V}, \cite{Evans},
\cite{Hoffmann}-\cite{Kozyrev SV}, \cite{R-Zu}, \cite{T-Z}-\cite{V-V-Z},
\cite{Yoshino}-\cite{Zuniga-LNM-2016}, and the references therein. A central
paradigm in physics of complex systems (such proteins or glasses) asserts that
the dynamics of such systems can be modeled as a random walk in the energy
landscape of the system, see e.g. \cite{Fraunfelder et al}-\cite{Fraunfelder
et al 3}, \cite{KKZuniga}, \cite{Kozyrev SV}, and the references
therein.\ Typically these landscapes have a huge number of local minima. It is
clear that a description of the dynamics on such landscapes require an
adequate approximation. The interbasin kinetics offers an acceptable solution
to this problem. By using this approach an energy landscape is approximated by
an ultrametric space (a rooted tree) and a function on this space describing
the distribution of the activation barriers, see e.g. \cite{Becker et al},
\cite{Stillinger et al 1}-\cite{Stillinger et al 2}. In this setting the
dynamics of a complex system is codified in a master equation which describes
the temporal behavior of the jumping probability between two states of the
system, see e.g. \cite{Kozyrev SV}. In \cite{Av-4}-\cite{Av-5} Avetisov et al.
introduced a new class of models for complex systems based on $p$-adic
analysis, these models can be applied, for instance, to study the relaxation
of biological complex systems. In this article we continue the study of these
models, more precisely, we study $n$-dimensional versions the master equations
introduced in \cite{Av-4}-\cite{Av-2}, for exponential landscapes. We
establish rigorously that such equations are ultradiffusion equations, i.e. we
show that the fundamental solutions of these equations are transition density
functions of L\'{e}vy processes with space state $\mathbb{Q}_{p}^{n}$, see
Theorem \ref{Theorem2}. We also study the first passage time problem for the
processes constructed in this article, see Theorem \ref{Theorem3}. Finally, we
point out that it is not possible, due to physical and mathematical reasons,
to forget the ultradiffusion equations and work exclusively with the attached
Markovian semigroups. These semigroups have been extensively studied in the
case of totally disconnected groups, see e.g. \cite{Bendikov}-\cite{Bendikov2}%
, \cite{Evans}. For instance, in the study of the first passage time problem,
the ultradiffusion equation itself plays a central role, see Lemmas
\ref{lemma8}-\ref{lemma9}. In a long term perspective, the goal is to extend
the results presented here to equations of variable coefficients to obtain a
theory similar to the one presented in \cite{Zuniga-LNM-2016}, \cite{Ch-Z-2}.
A detailed discussion of our results (from the perspective of the Avetisov et
al. models) and as well as a comparison with current literature is given in
Section 3.

\section{$p$\textbf{-}Adic Analysis: Essential Ideas}

\subsection{The field of $p$-adic numbers}

Along this article $p$ will denote a prime number. The field of $p-$adic
numbers $\mathbb{Q}_{p}$ is defined as the completion of the field of rational
numbers $\mathbb{Q}$ with respect to the $p-$adic norm $|\cdot|_{p}$, which is
defined as
\[
\left\vert x\right\vert _{p}=\left\{
\begin{array}
[c]{lll}%
0 & \text{if} & x=0\\
&  & \\
p^{-\gamma} & \text{if} & x=p^{\gamma}\frac{a}{b}\text{,}%
\end{array}
\right.
\]
where $a$ and $b$ are integers coprime with $p$. The integer $\gamma:=ord(x)$,
with $ord(0):=+\infty$, is called the\textit{\ }$p-$\textit{adic order of}
$x$. We extend the $p-$adic norm to $\mathbb{Q}_{p}^{n}$ by taking%
\[
||x||_{p}:=\max_{1\leq i\leq n}|x_{i}|_{p},\qquad\text{for }x=(x_{1}%
,\dots,x_{n})\in\mathbb{Q}_{p}^{n}.
\]
We define $ord(x)=\min_{1\leq i\leq n}\{ord(x_{i})\}$, then $||x||_{p}%
=p^{-ord(x)}$.\ The metric space $\left(  \mathbb{Q}_{p}^{n},||\cdot
||_{p}\right)  $ is a complete ultrametric space. As a topological space
$\mathbb{Q}_{p}$\ is homeomorphic to a Cantor-like subset of the real line,
see e.g. \cite{Alberio et al}, \cite{V-V-Z}.

Any $p-$adic number $x\neq0$ has a unique expansion of the form
\[
x=p^{ord(x)}\sum_{j=0}^{\infty}x_{i}p^{j},
\]
where $x_{j}\in\{0,1,2,\dots,p-1\}$ and $x_{0}\neq0$. By using this expansion,
we define \textit{the fractional part of }$x\in\mathbb{Q}_{p}$, denoted
$\{x\}_{p}$, as the rational number
\[
\left\{  x\right\}  _{p}=\left\{
\begin{array}
[c]{lll}%
0 & \text{if} & x=0\text{ or }ord(x)\geq0\\
&  & \\
p^{ord(x)}\sum_{j=o}^{-ord_{p}(x)-1}x_{j}p^{j} & \text{if} & ord(x)<0.
\end{array}
\right.
\]

In addition, any $p-$adic number $x\neq0$ can be represented uniquely as
$x=p^{ord(x)}ac\left(  x\right)  $ where $ac\left(  x\right)  =\sum
_{j=0}^{\infty}x_{i}p^{j}$, $x_{0}\neq0$, is called the \textit{angular
component} of $x$. Notice that $\left\vert ac\left(  x\right)  \right\vert
_{p}=1$.

\subsection{Additive characters}

Set $\chi_{p}(y)=\exp(2\pi i\{y\}_{p})$ for $y\in\mathbb{Q}_{p}$. The map
$\chi_{p}(\cdot)$ is an additive character on $\mathbb{Q}_{p}$, i.e. a
continuous map from $\left(  \mathbb{Q}_{p},+\right)  $ into $S$ (the unit
circle considered as multiplicative group) satisfying $\chi_{p}(x_{0}%
+x_{1})=\chi_{p}(x_{0})\chi_{p}(x_{1})$, $x_{0},x_{1}\in\mathbb{Q}_{p}$. The
additive characters of $\mathbb{Q}_{p}$ form an Abelian group which is
isomorphic to $\left(  \mathbb{Q}_{p},+\right)  $, the isomorphism is given by
$\xi\rightarrow\chi_{p}(\xi x)$, see e.g. \cite[Section 2.3]{Alberio et al}.

\subsection{Topology of $\mathbb{Q}_{p}^{n}$}

For $r\in\mathbb{Z}$, denote by $B_{r}^{n}(a)=\{x\in\mathbb{Q}_{p}%
^{n};||x-a||_{p}\leq p^{r}\}$ \textit{the ball of radius }$p^{r}$ \textit{with
center at} $a=(a_{1},\dots,a_{n})\in\mathbb{Q}_{p}^{n}$, and take $B_{r}%
^{n}(0):=B_{r}^{n}$. Note that $B_{r}^{n}(a)=B_{r}(a_{1})\times\cdots\times
B_{r}(a_{n})$, where $B_{r}(a_{i}):=\{x\in\mathbb{Q}_{p};|x_{i}-a_{i}|_{p}\leq
p^{r}\}$ is the one-dimensional ball of radius $p^{r}$ with center at
$a_{i}\in\mathbb{Q}_{p}$. The ball $B_{0}^{n}$ equals the product of $n$
copies of $B_{0}=\mathbb{Z}_{p}$, \textit{the ring of }$p- $\textit{adic
integers}. We also denote by $S_{r}^{n}(a)=\{x\in\mathbb{Q}_{p}^{n}%
;||x-a||_{p}=p^{r}\}$ \textit{the sphere of radius }$p^{r}$ \textit{with
center at} $a=(a_{1},\dots,a_{n})\in\mathbb{Q}_{p}^{n}$, and take $S_{r}%
^{n}(0):=S_{r}^{n}$. We notice that $S_{0}^{1}=\mathbb{Z}_{p}^{\times}$ (the
group of units of $\mathbb{Z}_{p}$), but $\left(  \mathbb{Z}_{p}^{\times
}\right)  ^{n}\subsetneq S_{0}^{n}$. The balls and spheres are both open and
closed subsets in $\mathbb{Q}_{p}^{n}$. In addition, two balls in
$\mathbb{Q}_{p}^{n}$ are either disjoint or one is contained in the other.

As a topological space $\left(  \mathbb{Q}_{p}^{n},||\cdot||_{p}\right)  $ is
totally disconnected, i.e. the only connected \ subsets of $\mathbb{Q}_{p}%
^{n}$ are the empty set and the points. A subset of $\mathbb{Q}_{p}^{n}$ is
compact if and only if it is closed and bounded in $\mathbb{Q}_{p}^{n}$, see
e.g. \cite[Section 1.3]{V-V-Z}, or \cite[Section 1.8]{Alberio et al}. The
balls and spheres are compact subsets. Thus $\left(  \mathbb{Q}_{p}%
^{n},||\cdot||_{p}\right)  $ is a locally compact topological space.

We will use $\Omega\left(  p^{-r}||x-a||_{p}\right)  $ to denote the
characteristic function of the ball $B_{r}^{n}(a)$. We will use the notation
$1_{A}$ for the characteristic function of a set $A$.

\subsection{The Bruhat-Schwartz space and the Fourier transform}

A complex-valued function $\varphi$ defined on $\mathbb{Q}_{p}^{n}$ is
\textit{called locally constant} if for any $x\in\mathbb{Q}_{p}^{n}$ there
exist an integer $l(x)\in\mathbb{Z}$ such that%
\begin{equation}
\varphi(x+x^{\prime})=\varphi(x)\text{ for }x^{\prime}\in B_{l(x)}%
^{n}.\label{local_constancy}%
\end{equation}

A function $\varphi:\mathbb{Q}_{p}^{n}\rightarrow\mathbb{C}$ is called a
\textit{Bruhat-Schwartz function (or a test function)} if it is locally
constant with compact support. Any test function can be represented as a
linear combination, with complex coefficients, of characteristic functions of
balls. The $\mathbb{C}$-vector space of Bruhat-Schwartz functions is denoted
by $\mathcal{D}(\mathbb{Q}_{p}^{n})$. For $\varphi\in\mathcal{D}%
(\mathbb{Q}_{p}^{n})$, the largest number $l=l(\varphi)$ satisfying
(\ref{local_constancy}) is called \textit{the exponent of local constancy (or
the parameter of constancy) of} $\varphi$.

If $U$ is an open subset of $\mathbb{Q}_{p}^{n}$, $\mathcal{D}(U)$ denotes the
space of test functions with supports contained in $U$, then $\mathcal{D}(U)$
is dense in
\[
L^{\rho}\left(  U\right)  =\left\{  \varphi:U\rightarrow\mathbb{C};\left\Vert
\varphi\right\Vert _{\rho}=\left\{  \int_{\mathbb{Q}_{p}^{n}}\left\vert
\varphi\left(  x\right)  \right\vert ^{\rho}d^{n}x\right\}  ^{\frac{1}{\rho}%
}<\infty\right\}  ,
\]
where $d^{n}x$ is the Haar measure on $\mathbb{Q}_{p}^{n}$ normalized by the
condition $vol(B_{0}^{n})\allowbreak=1$, for $1\leq\rho<\infty$, see e.g.
\cite[Section 4.3]{Alberio et al}.

\subsection{The Fourier transform of test functions}

Given $\xi=(\xi_{1},\dots,\xi_{n})$ and $y=(x_{1},\dots,x_{n})\in
\mathbb{Q}_{p}^{n}$, we set $\xi\cdot x:=\sum_{j=1}^{n}\xi_{j}x_{j}$. The
Fourier transform of $\varphi\in\mathcal{D}(\mathbb{Q}_{p}^{n})$ is defined
as
\[
(\mathcal{F}\varphi)(\xi)=\int_{\mathbb{Q}_{p}^{n}}\chi_{p}(\xi\cdot
x)\varphi(x)d^{n}x\quad\text{for }\xi\in\mathbb{Q}_{p}^{n},
\]
where $d^{n}x$ is the normalized Haar measure on $\mathbb{Q}_{p}^{n}$. The
Fourier transform is a linear isomorphism from $\mathcal{D}(\mathbb{Q}_{p}%
^{n})$ onto itself satisfying $(\mathcal{F}(\mathcal{F}\varphi))(\xi
)=\varphi(-\xi)$, see e.g. \cite[Section 4.8]{Alberio et al}. We will also use
the notation $\mathcal{F}_{x\rightarrow\xi}\varphi$ and $\widehat{\varphi}%
$\ for the Fourier transform of $\varphi$.

\subsection{Distributions}

Let $\mathcal{D}^{\prime}(\mathbb{Q}_{p}^{n})$ denote the $\mathbb{C}$-vector
space of all continuous functionals (distributions) on $\mathcal{D}%
(\mathbb{Q}_{p}^{n})$. The natural pairing $\mathcal{D}^{\prime}%
(\mathbb{Q}_{p}^{n})\times\mathcal{D}(\mathbb{Q}_{p}^{n})\rightarrow
\mathbb{C}$ is denoted as $\left(  T,\varphi\right)  $ for $T\in
\mathcal{D}^{\prime}(\mathbb{Q}_{p}^{n})$ and $\varphi\in\mathcal{D}%
(\mathbb{Q}_{p}^{n})$, see e.g. \cite[Section 4.4]{Alberio et al}.

Every $f$\ in $L_{loc}^{1}$ defines a distribution $f\in\mathcal{D}^{\prime
}\left(  \mathbb{Q}_{p}^{n}\right)  $ by the formula
\[
\left(  f,\varphi\right)  =%
{\textstyle\int\limits_{\mathbb{Q}_{p}^{n}}}
f\left(  x\right)  \varphi\left(  x\right)  d^{n}x.
\]
Such distributions are called \textit{regular distributions}.

\subsection{The Fourier transform of a distribution}

The Fourier transform $\mathcal{F}\left[  T\right]  $ of a distribution
$T\in\mathcal{D}^{\prime}\left(  \mathbb{Q}_{p}^{n}\right)  $ is defined by%
\[
\left(  \mathcal{F}\left[  T\right]  ,\varphi\right)  =\left(  T,\mathcal{F}%
\left[  \varphi\right]  \right)  \text{ for all }\varphi\in\mathcal{D}%
(\mathbb{Q}_{p}^{n})\text{.}%
\]
The Fourier transform $T\rightarrow\mathcal{F}\left[  T\right]  $ is a linear
(and continuous) isomorphism from $\mathcal{D}^{\prime}\left(  \mathbb{Q}%
_{p}^{n}\right)  $\ onto $\mathcal{D}^{\prime}\left(  \mathbb{Q}_{p}%
^{n}\right)  $. Furthermore, $T=\mathcal{F}\left[  \mathcal{F}\left[
T\right]  \left(  -\xi\right)  \right]  $.

\section{\label{Sect_Disc}$p$-adic Models of Relaxation of Complex Systems:
discussion of the results.}

The dynamics of a large class of complex systems such as glasses and proteins
is described by a random walk on a complex energy landscape, see e.g.
\cite{Fraunfelder et al 2}, \cite{Fraunfelder et al 3}, \cite[and \ the
references therein]{KKZuniga}, \cite{Kozyrev SV}, \cite{Wales}.\ An
\textit{energy landscape} (or simply a landscape) is a continuous function
$\mathbb{U}:X\rightarrow\mathbb{R}$ that assigns to each physical state of a
system its energy. In many cases we can take $X$ to be a subset of
$\mathbb{R}^{N}$. The term \textit{complex landscape} means that function
$\mathbb{U}$ has many local minima. In this case the method of
\textit{interbasin kinetics} is applied, in this approach, the study of a
random walk on a complex landscape is based on a description of the kinetics
generated by transitions between groups of states (\textit{basins}). Minimal
basins correspond to local minima of energy, and large basins have
hierarchical structure. The transition rate between basins is determined by
the Arrhenius factor, which depends on the energy barrier between these
basins. Procedures for constructing hierarchies of basins kinetics from any
energy landscapes have been studied extensively, see e.g. \cite{Becker et al},
\cite{Stillinger et al 1}, \cite{Stillinger et al 2}. By using these methods,
a complex landscape is approximated by a \textit{disconnectivity graph} (a
rooted tree) and by a function on the tree describing the distribution of the
activation energies. The dynamics of the system is then encoded in a system of
kinetic equations of the form:%
\begin{equation}
\frac{\partial}{\partial t}u\left(  i,t\right)  =-\sum_{j}\left\{
T(i,j)u\left(  i,t\right)  -T\left(  j,i\right)  u\left(  j,t\right)
\right\}  v\left(  j\right)  ,\label{Equation_Discrete}%
\end{equation}
where the indices $i$, $j$ number the states of the system (which correspond
to local minima of energy), $T(i,j)\geq0$ is the probability per unit time of
a transition from $i$ to $j$, and the $v(j)>0$ are the basin volumes. For
further details the reader may consult \cite{KKZuniga}, \cite{Kozyrev SV}, and
\ the references therein. Several models of interbasin kinetics and
hierarchical dynamics have been studied, see e.g. \cite{Hoffmann}, \cite[and
\ the references therein]{Kozyrev SV}, \cite{Ogielski et al}.

In \cite{Av-4}-\cite{Av-5} Avetisov et al. developed new class of models of
interbasin kinetics using ultrametric diffusion generated by $p$-adic
pseudodifferential operators. In these models, the time-evolution of the
system is controlled by a master equation of the form:
\begin{equation}
\frac{\partial u\left(  x,t\right)  }{\partial t}=%
{\displaystyle\int\limits_{\mathbb{Q}_{p}}}
\left\{  j\left(  x\mid y\right)  u\left(  y,t\right)  -j\left(  y\mid
x\right)  u\left(  x,t\right)  \right\}  dy\text{, }x\in\mathbb{Q}_{p}\text{,
}t\in\mathbb{R}_{+},\label{Master_E}%
\end{equation}
where the function $u\left(  x,t\right)  :\mathbb{Q}_{p}\times\mathbb{R}%
_{+}\rightarrow\mathbb{R}_{+}$ is a probability density distribution, and the
function $j\left(  x\mid y\right)  :\mathbb{Q}_{p}\times\mathbb{Q}%
_{p}\rightarrow\mathbb{R}_{+}$ is the probability of transition from state $y$
to the state $x$ per unit time. Master equation (\ref{Master_E}) is a
continuous version of (\ref{Equation_Discrete}) obtained from it by passing to
a `continuous limit' in $\mathbb{Q}_{p}$ under the conditions $v(j)=1$,
$T(i,j)=j\left(  \left\vert i-j\right\vert _{p}\right)  $, see e.g.
\cite{Kozyrev SV}. The transition from a state $y$ to a state $x$ can be
perceived as overcoming the energy barrier \ separating these states. In
\cite{Av-4} an Arrhenius type relation was used, that is,%
\[
j\left(  x\mid y\right)  \sim A(T)\exp\left\{  -\frac{\mathbb{U}\left(  x\mid
y\right)  }{kT}\right\}  ,
\]
where $\mathbb{U}\left(  x\mid y\right)  $ is the height of the activation
barrier for the transition from the state $y$ to state $x$, $k$ is the
Boltzmann constant and $T$ is the temperature. This formula establishes a
relation between the structure of the energy landscape $\mathbb{U}\left(
x\mid y\right)  $ and the transition function $j\left(  x\mid y\right)  $. The
case $j\left(  x\mid y\right)  =j\left(  y\mid x\right)  $ corresponds to a
\textit{degenerate energy landscape}. In this case the master equation
(\ref{Master_E}) takes the form%
\begin{equation}
\frac{\partial u\left(  x,t\right)  }{\partial t}=%
{\displaystyle\int\limits_{\mathbb{Q}_{p}}}
j\left(  \left\vert x-y\right\vert _{p}\right)  \left\{  u\left(  y,t\right)
-u\left(  x,t\right)  \right\}  dy\text{,}\label{Master_E1}%
\end{equation}
where $j\left(  \left\vert x-y\right\vert _{p}\right)  =\frac{A(T)}{\left\vert
x-y\right\vert _{p}}\exp\left\{  -\frac{\mathbb{U}\left(  \left\vert
x-y\right\vert _{p}\right)  }{kT}\right\}  $. By choosing $\mathbb{U}$
conveniently, several energy landscapes can be obtained. Following
\cite{Av-4}, there are three basic landscapes: (i) (logarithmic) $j\left(
\left\vert x-y\right\vert _{p}\right)  =\frac{1}{\left\vert x-y\right\vert
_{p}\ln^{\alpha}\left(  1+\left\vert x-y\right\vert _{p}\right)  }$,
$\alpha>1$ (ii) (linear) $j\left(  \left\vert x-y\right\vert _{p}\right)
=\frac{1}{\left\vert x-y\right\vert _{p}^{\alpha+1}}$, $\alpha>0$, (iii)
(exponential) $j\left(  \left\vert x-y\right\vert _{p}\right)  =\frac
{e^{-\alpha\left\vert x-y\right\vert _{p}}}{\left\vert x-y\right\vert _{p}}$,
$\alpha>0$.

In our opinion, the novelty and relevance of the idealistic models of Avetisov
et al. come from two facts: first, they codify, in a mathematical language,
the central physical paradigm asserting that the dynamics of several complex
systems can be described as a random walk on a rooted tree; second, these
models give a description of the characteristic types of relaxation of complex systems.

The original models of Avetisov et al. were formulated in dimension one. The
corresponding master equations were obtained by studying a random walk on a
disconnectivity graph, which comes from `one' cross section of the energy
landscape of a system, by using the above mentioned limit process, the tree
becomes in $\mathbb{Q}_{p}$. In \cite{Fraunfelder et al} Frauenfelder et al.
have explicitly pointed out that using rooted trees (disconnectivity graphs)
constructed from `one' cross section of an energy landscape of a complex
system is misleading in two respects: \textquotedblleft it appears that a
transition from an initial state $i$ to a final state $j$ must follow a unique
pathway, and second entropy \ does not play a role,\textquotedblright\ see
\cite[p. 98 and figures 11.3 and 11.4]{Fraunfelder et al}. If we consider
several cross sections of an energy landscape, each of them gives rise to a
disconnectivity graph, and hence the indices $i$, $j$ in equation
(\ref{Master_E}) are vectors running on the Cartesian product of the
disconnectivity graphs. In the continuous model, see (\ref{Master_E1}), the
variables $x$, $y$ run through $\mathbb{Q}_{p}^{n}$. Therefore, the master
equations for the Avetisov et al. models should be studied in arbitrary
dimension due to physical reasons and for mathematical generality. This
program is being developed by the second author and his collaborators in
recent years, see e.g. \cite{Casas-Zuniga}, \cite{Ch-Z-2}, \cite{Ch-Z-1},
\cite{KKZuniga}, \cite{R-Zu}, \cite{T-Z}, \cite{Zu}, \cite{Z1},
\cite{Zuniga-LNM-2016}.

This article aims to study some aspects of the dynamics of `random walks'
associated with the exponential landscapes introduced in \cite{Av-4}. The
corresponding master equation takes the form%
\begin{equation}
\left\{
\begin{array}
[c]{llll}%
\frac{\partial u\left(  x,t\right)  }{\partial t}=\left(  J\ast u(\cdot
,t)\right)  \left(  x\right)  -u\left(  x,t\right)  , & x\in\mathbb{Q}_{p}%
^{n}, & t\geq0 & (A)\\
&  &  & \\
u\left(  x,0\right)  =u_{0}\left(  x\right)  , &  &  & (B)
\end{array}
\right.  \label{Cauchy}%
\end{equation}
where $J:\mathbb{Q}_{p}^{n}\rightarrow\mathbb{R}_{+}$ belongs to a family of
functions, depending on several parameters, which codify the structure of an
energy landscape. The family of landscapes studied here is an `integrable'
generalization of the one-dimensional exponential landscapes introduced in
\cite{Av-4}. The first step is to establish that (\ref{Cauchy})-(A) is an
ultradiffusion equation, i.e. that its fundamental solution is the transition
density of a Markov process with space state $\mathbb{Q}_{p}^{n}$. It is
important to mention here, that this fact has been established rigorously only
for the linear landscapes, in this case (\ref{Cauchy})-(A) becomes a $p$-adic
heat equation, and these equations and their Markov processes have been
studied intensively lately, see \cite{Casas-Zuniga}, \cite{Ch-Z-1},
\cite{Ch-Z-2}, \cite{Kigami}, \cite{Koch}, \cite{V-V-Z}, \cite{Z1}, \cite{Zu},
\cite{Zuniga-LNM-2016}, and the references therein. For the exponential and
logarithmic landscapes of \cite{Av-4}, the function $j$ is only locally
integrable, and thus the natural domain of the operator in the \ right
hand-side of (\ref{Master_E1}) is not evident, and it is not given in
\cite{Av-4}, hence the fact that (\ref{Master_E1}) is an ultradiffusion
equation in the case of exponential and logarithm landscapes was not
established in \cite{Av-4}. By imposing to the function $J$ the condition of
being integrable, the operator in the\ right hand-side of (\ref{Cauchy})-(A)
becomes a linear bounded operator on $L^{\rho}$, $1\leq\rho\leq\infty$, and in
the case of exponential-type landscapes, we show that (\ref{Cauchy})-(A) is an
ultradiffusion equation. In this article we show that the fundamental solution
of (\ref{Cauchy})-(A) is the transition density of a L\'{e}vy process, see
Theorem \ref{Theorem2}. It is worth to mention that the real counterpart of
equation (\ref{Cauchy})-(A) has been studied intensively, \ in this setting
the equation has been used to model diffusion processes, see e.g.
\cite{Andreu-Vaillo et al}. In the preface of \cite{Andreu-Vaillo et al} the
authors show that for certain Markov processes their density transition
functions satisfy (\ref{Cauchy})-(A). In this article, we investigate the
converse of this situation, in a $p$-adic setting.

We also study the first passage time problem for stochastic processes
$\mathfrak{J}\left(  t,\omega\right)  $ whose transition density functions
satisfy (\ref{Cauchy})-(A)-(B), with $u_{0}(x)$ equals to the characteristic
function of $\mathbb{Z}_{p}^{n}$. More precisely, we study the random variable
$\tau_{\mathbb{Z}_{p}^{n}}\left(  \omega\right)  $ defined as the smallest
time in which a path of $\mathfrak{J}\left(  t,\omega\right)  $ returns to
$\mathbb{Z}_{p}^{n}$. We show that every path of any of these processes is
sure to return to $\mathbb{Z}_{p}^{n},$ see Theorem \ref{Theorem3}.

It is important to mention that our results do not cover all the exponential
landscapes introduced here, this will require the study of equations of type
(\ref{Cauchy})-(A)-(B) in a more general setting. Finally the results in
\cite{Ch-Z-1} present an $n$-dimensional generalization of the linear energy
landscapes of \cite{Av-4}, this generalization was achieved by generalizing
the $p$-adic heat equations. There are several important differences between
this article and \cite{Ch-Z-1}. First, the operators (Laplacians) considered
in \cite{Ch-Z-1} are unbounded operators densely defined in suitable subspaces
the $L^{2}\left(  \mathbb{Q}_{p}^{n}\right)  $ while \ the operators studied
here are bounded operators defined in $L^{\rho}\left(  \mathbb{Q}_{p}%
^{n}\right)  $, $1\leq\rho\leq\infty$; second, the fundamental solutions in
\cite{Ch-Z-1} are integrable functions of the position while here the
fundamental solutions are distributions which makes the study of the
corresponding stochastic processes more involved.

\section{Preliminary results}

\subsection{\label{Section_Exp_Lan}Exponential Landscapes}

In this section we give several technical results for the functions $J$ that
codify the structure of the energy landscapes studied in this article.

Set $\mathbb{R}_{+}:=\{x\in\mathbb{R};x\geq0\}.$ We fix a continuous function
$J:$\textbf{\ }$\mathbb{R}_{+}\rightarrow\mathbb{R}_{+}$, and take
$J(x)=J(||x||_{p})$ for $x\in%
\mathbb{Q}
_{p}^{n}$, then $J(x)$ is a \textit{radial function} on $%
\mathbb{Q}
_{p}^{n}$. In addition, we assume that $\ \int_{%
\mathbb{Q}
_{p}^{n}}J(||x||_{p})d^{n}x=1$.

\begin{definition}
We say that function $J(\left\Vert x\right\Vert _{p})$ is of exponential type
if there exist positive real constants $A$, $B$, $C_{1}$, and a real constant
$\gamma>-n$ such that
\[
A\left\Vert x\right\Vert _{p}^{\gamma}e^{-C_{1}\left\Vert x\right\Vert _{p}%
}\leq J(\left\Vert x\right\Vert _{p})\leq B\left\Vert x\right\Vert
_{p}^{\gamma}e^{-C_{1}\left\Vert x\right\Vert _{p}}\text{, for any }%
x\in\mathbb{Q}_{p}^{n}.
\]

\end{definition}

\begin{remark}
\label{Nota-1}The condition $\gamma>-n$ is completely necessary to assure that
$J\in L^{1}$. We notice that in dimension one the function $\frac
{e^{-C\left\vert x\right\vert _{p}}}{\left\vert x\right\vert _{p}}$, $C>0$,
which was used in \cite{Av-4} as a $j$ function, is not integrable. Indeed,
assume that $\frac{e^{-C\left\vert x\right\vert _{p}}}{\left\vert x\right\vert
_{p}}\in L^{1}$, then the following integral exists:%
\begin{align}
\int\nolimits_{\mathbb{Z}_{p}}\frac{e^{-C\left\vert x\right\vert _{p}}%
}{\left\vert x\right\vert _{p}}dx  & =\int\nolimits_{p\mathbb{Z}_{p}}%
\frac{e^{-C\left\vert x\right\vert _{p}}}{\left\vert x\right\vert _{p}}%
dx+\int\nolimits_{\mathbb{Z}_{p}^{\times}}\frac{e^{-C\left\vert x\right\vert
_{p}}}{\left\vert x\right\vert _{p}}dx\nonumber\\
& =\int\nolimits_{\mathbb{Z}_{p}}\frac{e^{-Cp^{-1}\left\vert x\right\vert
_{p}}}{\left\vert x\right\vert _{p}}dx+e^{-C}\left(  1-p^{-1}\right)
.\label{Integral}%
\end{align}
Now, since $C\left\vert x\right\vert _{p}\geq Cp^{-1}\left\vert x\right\vert
_{p}$, we have $\int\nolimits_{\mathbb{Z}_{p}}\frac{e^{-C\left\vert
x\right\vert _{p}}}{\left\vert x\right\vert _{p}}dx-\int\nolimits_{\mathbb{Z}%
_{p}}\frac{e^{-Cp^{-1}\left\vert x\right\vert _{p}}}{\left\vert x\right\vert
_{p}}dx\leq0$, which contradicts (\ref{Integral}). This situation causes
several mathematical problems that we will discuss later on.
\end{remark}

\begin{lemma}
\label{lemma1}With the above notation, the following assertions hold:

\noindent(i) $\widehat{J}(\xi)$ is a real-valued, radial (i.e. $\widehat
{J}(\xi)=\widehat{J}(\left\Vert \xi\right\Vert _{p})$), and continuous
function, satisfying $|\widehat{J}(\left\Vert \xi\right\Vert _{p})|\leq1$ and
\textbf{\ }$\widehat{J}(0)=1$;

\noindent(ii) for $\xi\in\mathbb{Q}_{p}^{n}\smallsetminus\left\{  0\right\}
$,%
\[
1-\widehat{J}(\left\Vert \xi\right\Vert _{p})=\left\Vert \xi\right\Vert
_{p}^{-n}J(p\left\Vert \xi\right\Vert _{p}^{-1})+p^{n}\left\Vert
\xi\right\Vert _{p}^{-n}\sum\nolimits_{l=0}^{\infty}p^{nl}J(p^{1+l}\left\Vert
\xi\right\Vert _{p}^{-1});
\]

\noindent(iii) if $-n<\gamma<0$, then
\[
1-\widehat{J}(\left\Vert \xi\right\Vert _{p})\leq B_{1}\left\Vert
\xi\right\Vert _{p}^{-n-\gamma}e^{-C_{1}p\left\Vert \xi\right\Vert _{p}^{-1}%
}+B_{2}\left\Vert \xi\right\Vert _{p}^{-\gamma}e^{-C_{1}p\left\Vert
\xi\right\Vert _{p}^{-1}},
\]
for $\xi\in\mathbb{Q}_{p}^{n}\smallsetminus\left\{  0\right\}  $, where
$B_{1}$, $B_{2}$ are positive constants.
\end{lemma}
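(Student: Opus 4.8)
The plan is to treat the three assertions in turn: the first is a formal consequence of $J\in L^{1}$ and its radiality, the second a direct computation of a radial Fourier transform, and the third an estimate of the resulting series.

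For (i), since $J\geq 0$ is integrable, $\widehat{J}$ is well defined, bounded by $\int_{\mathbb{Q}_{p}^{n}}J\,d^{n}x=1$, and continuous (continuity of the Fourier transform on $L^{1}$); evaluating at $\xi=0$ gives $\widehat{J}(0)=\int J=1$, and $|\widehat{J}(\xi)|\leq\int|J|=\int J=1$. To see that $\widehat{J}$ is real I would use $J(\|x\|_{p})=J(\|-x\|_{p})$ together with the reality of $J$: substituting $x\mapsto -x$ shows $\overline{\widehat{J}(\xi)}=\widehat{J}(-\xi)=\widehat{J}(\xi)$. Radiality I would obtain either from the invariance of $J$ under the norm-preserving linear maps (which the Fourier transform intertwines), or, more economically, read it off the closed formula produced in (ii), which depends on $\xi$ only through $\|\xi\|_{p}$.

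For (ii), I would decompose $\mathbb{Q}_{p}^{n}\setminus\{0\}=\bigsqcup_{m\in\mathbb{Z}}S_{m}^{n}$ and use that $J$ is constant, equal to $J(p^{m})$, on each sphere $S_{m}^{n}$, so that $\widehat{J}(\xi)=\sum_{m}J(p^{m})\int_{S_{m}^{n}}\chi_{p}(\xi\cdot x)\,d^{n}x$. The sphere integrals I would evaluate from the standard ball integral $\int_{B_{m}^{n}}\chi_{p}(\xi\cdot x)\,d^{n}x=p^{mn}$ if $\|\xi\|_{p}\leq p^{-m}$ and $0$ otherwise, writing $S_{m}^{n}=B_{m}^{n}\setminus B_{m-1}^{n}$. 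Writing $\|\xi\|_{p}=p^{-N}$, this produces three cases ($m\leq N$, $m=N+1$, $m\geq N+2$), the middle one contributing the negative boundary term with prefactor $-\|\xi\|_{p}^{-n}J(p\|\xi\|_{p}^{-1})$. Subtracting from $\widehat{J}(0)=1=(1-p^{-n})\sum_{m}p^{mn}J(p^{m})$ collapses the closed part of the sum and leaves the tail $\sum_{m\geq N+1}$, which after the reindexing $m=N+1+l$ together with the identities $\|\xi\|_{p}^{-1}=p^{N}$ and $\|\xi\|_{p}^{-n}=p^{Nn}$ yields a tail sum of the stated form. The only points requiring care are the treatment of the boundary sphere and the legitimacy of the rearrangement, which is justified because the terms are eventually positive and the series converges (by $J\in L^{1}$).

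For (iii), I would insert the upper bound $J(\|x\|_{p})\leq B\|x\|_{p}^{\gamma}e^{-C_{1}\|x\|_{p}}$ into the formula of (ii). The isolated term is handled at once: $\|\xi\|_{p}^{-n}J(p\|\xi\|_{p}^{-1})\leq Bp^{\gamma}\|\xi\|_{p}^{-n-\gamma}e^{-C_{1}p\|\xi\|_{p}^{-1}}$, which already has the shape of the first summand. For the series, setting $\beta=n+\gamma>0$ and $s=\|\xi\|_{p}^{-1}$, the bound reduces the sum to a constant multiple of $\sum_{l\geq 0}(sp^{l})^{\beta}e^{-C_{1}p\,sp^{l}}$. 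The obstacle here is that the summand is \emph{unimodal} in $l$ and the effective lower limit of summation moves with $\xi$, so a crude term-by-term estimate against the $l=0$ value diverges (the factor $p^{l\beta}$ is not summable on its own). I would instead peel off the $l=0$ term, which equals $s^{\beta}e^{-C_{1}ps}$, and control the tail $\sum_{l\geq 1}$ by the super-exponential decay of the exponential: extracting a factor $e^{-C_{1}ps}$ via $p^{l}-1\geq\frac{1}{2}p^{l}$ for $l\geq 1$, the remaining sum $\sum_{l\geq 1}p^{l\beta}e^{-\frac{1}{2}C_{1}ps\,p^{l}}$ is comparable to $\int_{0}^{\infty}p^{x\beta}e^{-\frac{1}{2}C_{1}ps\,p^{x}}\,dx$, which after $y=p^{x}$ becomes an incomplete Gamma integral bounded by a constant multiple of $s^{-\beta}$. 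Hence $\sum_{l\geq 0}(sp^{l})^{\beta}e^{-C_{1}p\,sp^{l}}\leq(s^{\beta}+\mathrm{const})e^{-C_{1}ps}$, and the elementary inequality $s^{\beta}+\mathrm{const}\leq C(s^{\beta}+s^{\gamma})$ (valid for all $s>0$ since $\beta>0>\gamma$, both prefactors blowing up at the two ends and being bounded below on compacta) converts this into the two stated terms $B_{1}\|\xi\|_{p}^{-n-\gamma}e^{-C_{1}p\|\xi\|_{p}^{-1}}+B_{2}\|\xi\|_{p}^{-\gamma}e^{-C_{1}p\|\xi\|_{p}^{-1}}$. I expect the reconciliation of these two power-law regimes into a single clean bound to be the most delicate step.
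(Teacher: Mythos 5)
Your overall strategy is the paper's own: part (i) is the same routine $L^{1}$ argument, part (ii) is the same spherical decomposition (the paper first rescales $\xi $ to a unit vector and sums over $\left\Vert y\right\Vert _{p}=p^{j}$, you keep $\xi $ fixed and use the sphere integrals directly; these are the same computation), and part (iii) is the same ``insert the exponential bound and compare the resulting geometric--exponential sum with a Gamma-type integral'' estimate.

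The one place you must not wave your hands is the final bookkeeping in (ii), because the computation you describe does \emph{not} produce the identity as printed. Carrying out your plan with $\left\Vert \xi \right\Vert _{p}=p^{-N}$: one has $\int_{S_{m}^{n}}\chi _{p}(\xi \cdot x)\,d^{n}x=p^{mn}(1-p^{-n})$ for $m\leq N$, $=-p^{Nn}$ for $m=N+1$, and $=0$ for $m\geq N+2$; subtracting from $1=(1-p^{-n})\sum_{m}p^{mn}J(p^{m})$ and reindexing $m=N+1+l$ gives
\begin{equation*}
1-\widehat{J}(\left\Vert \xi \right\Vert _{p})=\left\Vert \xi \right\Vert
_{p}^{-n}J(p\left\Vert \xi \right\Vert _{p}^{-1})+\left( 1-p^{-n}\right)
p^{n}\left\Vert \xi \right\Vert _{p}^{-n}\sum_{l=0}^{\infty
}p^{nl}J(p^{1+l}\left\Vert \xi \right\Vert _{p}^{-1}),
\end{equation*}
i.e.\ the series carries an extra factor $(1-p^{-n})$ compared with the statement. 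This is not a slip on your side: the printed identity is in fact inconsistent with part (i). Take $p=2$, $n=1$ and $J=1_{\{|x|_{2}=2\}}$ (which is radial with integral $1$); the printed formula at $\left\Vert \xi \right\Vert _{2}=1$ gives $1-\widehat{J}(1)=1+2=3$, whereas $|\widehat{J}|\leq 1$ forces $1-\widehat{J}(1)\leq 2$, and the direct computation gives $\widehat{J}(1)=-1$, matching the corrected formula. The paper's proof contains the compensating arithmetic error: the sphere integrals $\int_{S_{0}^{n}}\{\chi _{p}(p^{-j}\sum_{i}z_{i})-1\}d^{n}z$ equal $-1$ for $j=1$ and $-(1-p^{-n})$ for $j\geq 2$, not $-1-p^{-n}$ and $-1$. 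The discrepancy is harmless downstream, since the corrected right-hand side is smaller: part (iii), Remark \ref{Nota_support} and Lemma \ref{lemma2A} all survive verbatim. But you should prove the corrected identity rather than assert that the reindexing ``yields a tail sum of the stated form''; that assertion, taken literally, is false.

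On (iii) your route differs mildly from the paper's: you keep the exponent $\beta =n+\gamma $ inside the sum and therefore need the final recombination $s^{\beta }+\mathrm{const}\leq C(s^{\beta }+s^{\gamma })$, whereas the paper first discards the factors $p^{\gamma (l+1)}\leq 1$ (this is where $\gamma <0$ enters), bounds $\sum_{l}p^{nl}e^{-C_{1}p^{1+l}s}\leq e^{-C_{1}ps}\left( 1+A_{0}s^{-n}\right) $ by comparison with $\int_{\mathbb{Q}_{p}^{n}}e^{-C_{1}\left\Vert y\right\Vert _{p}s}d^{n}y\leq C_{0}s^{-n}$, and the two powers $\left\Vert \xi \right\Vert _{p}^{-n-\gamma }$ and $\left\Vert \xi \right\Vert _{p}^{-\gamma }$ then fall out directly. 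Both arguments are correct; yours costs one extra elementary inequality.
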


\begin{proof}
(i) The Fourier transform of an integrable radial function is a real-valued
continuous radial function. Notice that $|\widehat{J}\left(  \left\Vert
\xi\right\Vert _{p}\right)  |\leq\int_{\mathbb{%
\mathbb{Q}
}_{p}^{n}}|\chi_{p}\left(  \xi\cdot x\right)  |J(x)d^{n}x=\int_{\mathbb{%
\mathbb{Q}
}_{p}^{n}}J(\left\Vert x\right\Vert _{p})d^{n}x\allowbreak=1$. Now, since $J $
is integrable $\widehat{J}\left(  0\right)  =$ $\int_{\mathbb{%
\mathbb{Q}
}_{p}^{n}}J(\left\Vert x\right\Vert _{p})d^{n}x=1$.

(ii) Take $\xi=p^{ord\left(  \xi\right)  }\xi_{0}$, with $\xi_{0}=\left(
\xi_{0}^{\left(  1\right)  },\ldots,\xi_{0}^{\left(  n\right)  }\right)  $,
$\left\Vert \xi_{0}\right\Vert _{p}=1$, and $ord\left(  \xi\right)
\in\mathbb{Z}$, then
\[
\widehat{J}\left(  \left\Vert \xi\right\Vert _{p}\right)  -1=\int
_{\mathbb{Q}_{p}^{n}}J(||x||_{p})\left\{  \chi_{p}\left(  p^{ord\left(
\xi\right)  }\xi_{0}\cdot x\right)  -1\right\}  d^{n}x.
\]
By changing variables as $y_{i}=p^{ord\left(  \xi\right)  }\xi_{0}^{\left(
i\right)  }x_{i}$ for $i=1,\ldots,n$, with $d^{n}x=p^{ord\left(  \xi\right)
n}d^{n}y$, we have%
\begin{align*}
\widehat{J}\left(  \left\Vert \xi\right\Vert _{p}\right)  -1 &  =p^{ord\left(
\xi\right)  n}\int\limits_{\mathbb{Q}_{p}^{n}}J(p^{ord\left(  \xi\right)
}||y||_{p})\left\{  \chi_{p}\left(  \sum_{i=1}^{n}y_{i}\right)  -1\right\}
d^{n}y\\
&  =p^{ord\left(  \xi\right)  n}\int\limits_{\mathbb{Q}_{p}^{n}\smallsetminus
\mathbb{Z}_{p}^{n}}J(p^{ord\left(  \xi\right)  }||y||_{p})\left\{  \chi
_{p}\left(  \sum_{i=1}^{n}y_{i}\right)  -1\right\}  d^{n}y\\
&  =p^{ord\left(  \xi\right)  n}%
{\displaystyle\sum\limits_{j=1}^{\infty}}
J(p^{j+ord\left(  \xi\right)  })\int\limits_{\left\Vert y\right\Vert
_{p}=p^{j}}\left\{  \chi_{p}\left(  \sum_{i=1}^{n}y_{i}\right)  -1\right\}
d^{n}y.
\end{align*}
By changing variables as $y=p^{-j}z$, $d^{n}y=p^{nj}d^{n}z$, we get%
\begin{gather*}
\widehat{J}\left(  \left\Vert \xi\right\Vert _{p}\right)  -1=p^{ord\left(
\xi\right)  n}%
{\displaystyle\sum\limits_{j=1}^{\infty}}
p^{nj}J(p^{j+ord\left(  \xi\right)  })\int\limits_{S_{0}^{n}}\left\{  \chi
_{p}\left(  p^{-j}\sum_{i=1}^{n}z_{i}\right)  -1\right\}  d^{n}z\\
=p^{ord\left(  \xi\right)  n}%
{\displaystyle\sum\limits_{j=1}^{\infty}}
p^{nj}J(p^{j+ord\left(  \xi\right)  })\left\{
\begin{array}
[c]{lll}%
-p^{-n} & \text{if} & j\leq0\\
-1-p^{-n} & \text{if} & j=1\\
-1 & \text{if} & j\geq2
\end{array}
\right.  \\
=-\left(  1+p^{-n}\right)  p^{ord\left(  \xi\right)  n+n}J(p^{1+ord\left(
\xi\right)  })-p^{ord\left(  \xi\right)  n}%
{\displaystyle\sum\limits_{j=2}^{\infty}}
p^{nj}J(p^{j+ord\left(  \xi\right)  })\\
=-p^{ord\left(  \xi\right)  n}J(p^{1+ord\left(  \xi\right)  })-p^{ord\left(
\xi\right)  n+n}%
{\displaystyle\sum\limits_{l=0}^{\infty}}
p^{nl}J(p^{1+l+ord\left(  \xi\right)  }).
\end{gather*}

(iii) From (ii) and the fact that $J$ is of exponential type, we get that%
\begin{equation}
1-\widehat{J}(\left\Vert \xi\right\Vert _{p})\leq Bp^{\gamma}\left\Vert
\xi\right\Vert _{p}^{-n-\gamma}e^{-C_{1}p\left\Vert \xi\right\Vert _{p}^{-1}%
}+Bp^{\gamma+n}\left\Vert \xi\right\Vert _{p}^{-n-\gamma}\sum_{l=0}^{\infty
}p^{\left(  n+\gamma\right)  l}e^{-C_{1}p^{1+l}\left\Vert \xi\right\Vert
_{p}^{-1}}.\label{eq_A}%
\end{equation}

By using that $-n<\gamma<0$,%
\begin{gather}
p^{\gamma}\sum_{l=0}^{\infty}p^{nl}p^{\gamma l}e^{-C_{1}p^{1+l}\left\Vert
\xi\right\Vert _{p}^{-1}}\leq\sum_{l=0}^{\infty}p^{nl}e^{-C_{1}p^{1+l}%
\left\Vert \xi\right\Vert _{p}^{-1}}\nonumber\\
=e^{-C_{1}p\left\Vert \xi\right\Vert _{p}^{-1}}\sum_{l=0}^{\infty}%
p^{nl}e^{-C_{1}p\left\Vert \xi\right\Vert _{p}^{-1}\left(  p^{l}-1\right)
}=e^{-C_{1}p\left\Vert \xi\right\Vert _{p}^{-1}}\left\{  1+\sum_{l=1}^{\infty
}p^{nl}e^{-C_{1}p\left\Vert \xi\right\Vert _{p}^{-1}\left(  p^{l}-1\right)
}\right\}  .\label{Eq_B}%
\end{gather}
By using that $p^{l}-1\geq p^{l-1}$ for any positive integer,
\begin{gather}
e^{-C_{1}p\left\Vert \xi\right\Vert _{p}^{-1}}\left\{  1+\sum_{l=1}^{\infty
}p^{nl}e^{-C_{1}p\left\Vert \xi\right\Vert _{p}^{-1}\left(  p^{l}-1\right)
}\right\}  \leq e^{-C_{1}p\left\Vert \xi\right\Vert _{p}^{-1}}\left\{
1+\sum_{l=1}^{\infty}p^{nl}e^{-C_{1}p^{l}\left\Vert \xi\right\Vert _{p}^{-1}%
}\right\} \nonumber\\
=e^{-C_{1}p\left\Vert \xi\right\Vert _{p}^{-1}}\left\{  1+\frac{1}{1-p^{-n}%
}\int\limits_{\left\Vert y\right\Vert _{p}>1}e^{-C_{1}\left\Vert y\right\Vert
_{p}\left\Vert \xi\right\Vert _{p}^{-1}}d^{n}y\right\} \nonumber\\
\leq e^{-C_{1}p\left\Vert \xi\right\Vert _{p}^{-1}}\left\{  1+\frac
{1}{1-p^{-n}}\int\limits_{\mathbb{Q}_{p}^{n}}e^{-C_{1}\left\Vert y\right\Vert
_{p}\left\Vert \xi\right\Vert _{p}^{-1}}d^{n}y\right\}  \leq e^{-C_{1}%
p\left\Vert \xi\right\Vert _{p}^{-1}}\left\{  1+A_{0}\left\Vert \xi\right\Vert
_{p}^{n}\right\}  ,\label{Eq_C}%
\end{gather}
where we used the well-known estimation%
\[
\int\limits_{\mathbb{Q}_{p}^{n}}e^{-\tau\left\Vert y\right\Vert _{p}}%
d^{n}y\leq C_{0}\tau^{-n}\text{ for }\tau>0.
\]
Therefore from (\ref{eq_A})-(\ref{Eq_C}),%
\[
1-\widehat{J}(\left\Vert \xi\right\Vert _{p})\leq B_{1}\left\Vert
\xi\right\Vert _{p}^{-n-\gamma}e^{-C_{1}p\left\Vert \xi\right\Vert _{p}^{-1}%
}+B_{2}\left\Vert \xi\right\Vert _{p}^{-\gamma}e^{-C_{1}p\left\Vert
\xi\right\Vert _{p}^{-1}}\text{ for }\xi\neq0.
\]

\end{proof}

\begin{remark}
\label{Nota_support}We notice that the fact that $J$ is of exponential type
implies that supp $J\left(  \left\Vert x\right\Vert _{p}\right)
\nsubseteqq\mathbb{Z}_{p}^{n}$. Then $1-\widehat{J}(1)>0$. Indeed, by Lemma
\ref{lemma1}-(ii),%
\begin{align*}
1-\widehat{J}(1)  & =J(p)+p^{n}\sum\nolimits_{l=0}^{\infty}p^{nl}%
J(p^{1+l})=J(p)+\sum\nolimits_{k=1}^{\infty}p^{nk}J(p^{k})\\
& =J(p)+\frac{1}{1-p^{-n}}\int\nolimits_{%
\mathbb{Q}
_{p}^{n}\smallsetminus\mathbb{Z}_{p}^{n}}J\left(  \left\Vert x\right\Vert
_{p}\right)  d^{n}x.
\end{align*}
If $1-\widehat{J}(1)=0$, then $J(p)=0$ and $J\left(  \left\Vert x\right\Vert
_{p}\right)  \equiv0$ for $x\in%
\mathbb{Q}
_{p}^{n}\smallsetminus\mathbb{Z}_{p}^{n}$, i.e. supp $J\left(  \left\Vert
x\right\Vert _{p}\right)  \subseteq\mathbb{Z}_{p}^{n}$.
\end{remark}

\begin{lemma}
\label{lemma2A}If $-n<\gamma<0$ and $J$ is of exponential type, then
\[
\frac{\Omega\left(  \left\Vert \xi\right\Vert _{p}\right)  }{1-\widehat
{J}\left(  \left\Vert \xi\right\Vert _{p}\right)  }\notin L^{1}\left(
\mathbb{Q}
_{p}^{n},d^{n}\xi\right)  .
\]

\end{lemma}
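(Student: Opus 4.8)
The plan is to reduce the statement to the divergence of a single radial integral and then to exploit the doubly-exponential blow-up of the integrand near the origin. First, recall that $\Omega\left( \left\Vert \xi \right\Vert _{p}\right) $ is the characteristic function of the ball $B_{0}^{n}=\mathbb{Z}_{p}^{n}$, so the assertion is equivalent to showing that
\begin{equation*}
\int_{\mathbb{Z}_{p}^{n}}\frac{d^{n}\xi }{1-\widehat{J}\left( \left\Vert \xi \right\Vert _{p}\right) }=+\infty .
\end{equation*}
Before integrating I would verify that the integrand is positive and well defined away from the origin: by Lemma \ref{lemma1}(ii) together with the positivity of $J$ (which follows at once from the exponential-type lower bound, since $\left\Vert x\right\Vert _{p}>0$ for $x\neq 0$), one has $1-\widehat{J}\left( \left\Vert \xi \right\Vert _{p}\right) >0$ for every $\xi \neq 0$, while the single point $\xi =0$ is a null set. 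Hence the integral is a well-defined element of $[0,+\infty ]$ and it suffices to bound it from below.

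The key input is the upper estimate for $1-\widehat{J}$ furnished by Lemma \ref{lemma1}(iii), which converts into a lower bound for the integrand. Writing $\left\Vert \xi \right\Vert _{p}=p^{-m}$ with $m\geq 0$, the hypothesis $-n<\gamma <0$ gives $\gamma \leq n+\gamma$ and hence $p^{m\gamma }\leq p^{m(n+\gamma )}$, so the two summands of that estimate combine into
\begin{equation*}
1-\widehat{J}\left( \left\Vert \xi \right\Vert _{p}\right) \leq (B_{1}+B_{2})\,p^{m(n+\gamma )}\,e^{-C_{1}p^{m+1}},\qquad \left\Vert \xi \right\Vert _{p}=p^{-m}.
\end{equation*}
Consequently, on the sphere $S_{-m}^{n}=\{\left\Vert \xi \right\Vert _{p}=p^{-m}\}$ the integrand is bounded below by $(B_{1}+B_{2})^{-1}p^{-m(n+\gamma )}e^{C_{1}p^{m+1}}$, a quantity that is constant on the sphere because $\widehat{J}$ is radial.

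Finally I would decompose $\mathbb{Z}_{p}^{n}\smallsetminus \{0\}$ into the spheres $S_{-m}^{n}$, $m\geq 0$, whose Haar volumes equal $p^{-mn}(1-p^{-n})$. Integrating the lower bound over each sphere yields a series whose general term is at least a fixed constant multiple of $p^{-m(2n+\gamma )}e^{C_{1}p^{m+1}}$. Since $e^{C_{1}p^{m+1}}$ grows doubly exponentially in $m$ whereas $p^{-m(2n+\gamma )}$ decays only like an ordinary geometric sequence, this general term tends to $+\infty $; in particular the series diverges, and therefore so does the original integral.

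The computation is essentially routine once Lemma \ref{lemma1}(iii) is available; the only real point requiring care is the bookkeeping of the exponents. Because $\left\Vert \xi \right\Vert _{p}\leq 1$ on $\mathbb{Z}_{p}^{n}$, the factor $e^{-C_{1}p\left\Vert \xi \right\Vert _{p}^{-1}}=e^{-C_{1}p^{m+1}}$ forces $1-\widehat{J}$ to vanish faster than any power of $\left\Vert \xi \right\Vert _{p}$ as $\xi \rightarrow 0$, so the reciprocal blows up so violently that the polynomially small sphere volumes cannot compensate. The main thing to get right is that, under $-n<\gamma <0$, the dominant polynomial factor in Lemma \ref{lemma1}(iii) is $p^{m(n+\gamma )}$ (positive exponent) rather than $p^{m\gamma }$; the precise value of the exponent $2n+\gamma $ appearing in the final series is immaterial, since the double exponential $e^{C_{1}p^{m+1}}$ dominates every power of $p^{m}$.
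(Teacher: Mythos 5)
Your proposal is correct and follows essentially the same route as the paper: both invoke Lemma \ref{lemma1}-(iii) to bound the integrand from below, decompose $\mathbb{Z}_{p}^{n}$ into the spheres $\left\Vert \xi \right\Vert _{p}=p^{-j}$, and conclude divergence because the factor $e^{C_{1}p^{j+1}}$ overwhelms the polynomial decay $p^{-j(2n+\gamma )}$. The only cosmetic difference is that you merge the two terms of the estimate via $p^{m\gamma }\leq p^{m(n+\gamma )}$, whereas the paper factors out $\left\Vert \xi \right\Vert _{p}^{\gamma }e^{C_{1}p\left\Vert \xi \right\Vert _{p}^{-1}}$ and keeps the denominator $B_{1}p^{jn}+B_{2}$; the resulting series are the same.
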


\begin{proof}
By using Lemma \ref{lemma1}-(iii),
\begin{gather*}
\int\nolimits_{\mathbb{Z}_{p}^{n}}\frac{d^{n}\xi}{1-\widehat{J}\left(
||\xi||_{p}\right)  }\geq\int\nolimits_{\mathbb{Z}_{p}^{n}}\frac{d^{n}\xi
}{B_{1}\left\Vert \xi\right\Vert _{p}^{-n-\gamma}e^{-C_{1}p\left\Vert
\xi\right\Vert _{p}^{-1}}+B_{2}\left\Vert \xi\right\Vert _{p}^{-\gamma
}e^{-C_{1}p\left\Vert \xi\right\Vert _{p}^{-1}}}\\
=\int\nolimits_{\mathbb{Z}_{p}^{n}}\frac{\left\Vert \xi\right\Vert
_{p}^{\gamma}e^{C_{1}p\left\Vert \xi\right\Vert _{p}^{-1}}d^{n}\xi}%
{B_{1}\left\Vert \xi\right\Vert _{p}^{-n}+B_{2}}=\sum\limits_{j=0}^{\infty
}\frac{p^{-j\gamma}e^{C_{1}p^{j+1}}}{B_{1}p^{jn}+B_{2}}%
{\displaystyle\int\limits_{||\xi||_{p}=p^{-j}}}
d^{n}\xi\\
=\left(  1-p^{-n}\right)  \sum\limits_{j=0}^{\infty}\frac{p^{-j\left(
n+\gamma\right)  }e^{C_{1}p^{j+1}}}{B_{1}p^{jn}+B_{2}}=\infty.
\end{gather*}

\end{proof}

\begin{remark}
\label{Nota1}

\noindent(i) A function $f:%
\mathbb{Q}
_{p}^{n}\rightarrow%
\mathbb{C}
$ is called positive definite, if
\[%
{\textstyle\sum\nolimits_{i,j=1}^{m}}
f(x_{i}-x_{j})\lambda_{i}\overline{\lambda_{j}}\geq0
\]
for all $m\in\mathbb{N}\backslash\{0\}$ , $x_{1},\ldots,x_{m}$ $\in$
$\mathbb{Q}_{p}^{n}$ and $\lambda_{1},\ldots,\lambda_{m}\in\mathbb{C}$. By a
direct calculation one verifies that\textbf{\ }$\widehat{J}(\left\Vert
\xi\right\Vert _{p})$ is a positive definite function.

\noindent(ii) A function $f:\mathbb{%
\mathbb{Q}
}_{p}^{n}\rightarrow%
\mathbb{C}
$ is called negative definite, if$\ $%
\[%
{\textstyle\sum\nolimits_{i,j=1}^{m}}
\left(  f(x_{i})+\overline{f(x_{j})}-f(x_{i}-x_{j})\right)  \lambda
_{i}\overline{\lambda_{j}}\geq0
\]
for all $m\in\mathbb{N}\backslash\{0\},$ $x_{1},\ldots,x_{m}\in$ $\mathbb{%
\mathbb{Q}
}_{p}^{n},$ $\lambda_{1},\ldots,\lambda_{m}$ $\in$ $\mathbb{C}$. A theorem due
to Schoenberg asserts that a function $f:%
\mathbb{Q}
_{p}^{n}\rightarrow%
\mathbb{C}
$ is negative definite if and only if the following two conditions are
satisfied:(1) $f(0)\geq0$ and $\ $(2) the function $x\longmapsto e^{-tf(x)}$
is positive definite for all $t>0$, cf. \cite[Theorem 7.8]{Berg-Gunnar}. By
using Corollary 7.7 in \cite[Theorem 7.8]{Berg-Gunnar}, the function
$\widehat{J}(0)-\widehat{J}(\left\Vert \xi\right\Vert _{p})=1-\widehat
{J}(\left\Vert \xi\right\Vert _{p})$ is negative definite, by applying
Schoenberg Theorem, the function $e^{t(\widehat{J}(\left\Vert \xi\right\Vert
_{p})-1)}$ is positive definite for all $t>0.$
\end{remark}

\subsection{\label{Section_Laplacians}A class of nonlocal $p$-adic Operators}

We define the operator $Af=J\ast f-f$ with $J$ as in Section
\ref{Section_Exp_Lan}. Then, for any $1\leq\rho\leq\infty$, $A:L^{\rho
}\longrightarrow L^{\rho}$ gives rise a well-defined linear bounded operator.
Indeed, by the Young inequality
\[
||Af||_{L^{\rho}}\leq||J\ast f||_{L^{\rho}}+||f||_{L^{\rho}}\leq||J||_{L^{1}%
}||f||_{L^{\rho}}+||f||_{L^{\rho}}\leq2||f||_{L^{\rho}}.
\]

\begin{proposition}
\label{lemma_prop1} Consider $A:L^{2}\left(  \mathbb{Q}_{p}^{n}\right)
\rightarrow L^{2}\left(  \mathbb{Q}_{p}^{n}\right)  $ given by
\[
Af\left(  x\right)  =\mathcal{F}_{\xi\rightarrow x}^{-1}\left(  (\widehat
{J}(||\xi||_{p})-1)\mathcal{F}_{x\rightarrow\xi}f\right)  ,
\]
and \ the Cauchy problem :%
\begin{equation}
\left\{
\begin{array}
[c]{ll}%
\frac{\partial u}{\partial t}(x,t)=Au(x,t)\text{,} & t\in\left[
0,\infty\right)  \text{,\ }x\in\mathbb{Q}_{p}^{n}\\
& \\
u(x,0)=u_{0}(x)\in\mathcal{D}(\mathbb{Q}_{p}^{n})\text{.} &
\end{array}
\right. \label{Cauchy_problem}%
\end{equation}
Then%
\[
u(x,t)=\int_{\mathbf{%
\mathbb{Q}
}_{p}^{n}}\chi_{p}\left(  -\xi\cdot x\right)  e^{(\widehat{J}(||\xi
||_{p})-1)t}\widehat{u_{0}}(\xi)d^{n}\xi
\]
is a classical solution of (\ref{Cauchy_problem}). In addition, $u(\cdot,t)$
is a continuous function for any $t\geq0$.
\end{proposition}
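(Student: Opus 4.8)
The plan is to recognize the operator $A$ as a Fourier multiplier and to solve the transformed equation, which decouples into a scalar ordinary differential equation in $t$ for each frequency $\xi$. Taking the Fourier transform $\mathcal{F}_{x\rightarrow \xi}$ of both sides of (\ref{Cauchy_problem}) and using the definition of $A$, the equation formally becomes $\partial_{t}\widehat{u}(\xi ,t)=(\widehat{J}(\|\xi\|_{p})-1)\widehat{u}(\xi ,t)$ with $\widehat{u}(\xi ,0)=\widehat{u_{0}}(\xi )$, whose solution is $\widehat{u}(\xi ,t)=e^{(\widehat{J}(\|\xi\|_{p})-1)t}\widehat{u_{0}}(\xi )$. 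Applying $\mathcal{F}^{-1}$ recovers exactly the proposed formula, so the substance of the proof is to justify these manipulations rigorously and to verify that the resulting function is a genuine classical solution.

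First I would check that the integral defining $u(x,t)$ converges for every $x$ and every $t\geq 0$. Since $u_{0}\in \mathcal{D}(\mathbb{Q}_{p}^{n})$, its Fourier transform $\widehat{u_{0}}$ also lies in $\mathcal{D}(\mathbb{Q}_{p}^{n})$ and hence has compact support. By Lemma \ref{lemma1}-(i) the multiplier $\widehat{J}(\|\xi\|_{p})$ is real-valued and bounded by $1$, so $\widehat{J}(\|\xi\|_{p})-1\leq 0$ and $|e^{(\widehat{J}(\|\xi\|_{p})-1)t}|\leq 1$ for all $t\geq 0$. The integrand is therefore the product of a bounded continuous function with a compactly supported function, hence it belongs to $L^{1}\cap L^{2}$; in particular $u(\cdot ,t)\in L^{2}(\mathbb{Q}_{p}^{n})$, and by Fourier inversion $\mathcal{F}_{x\rightarrow \xi }u(\cdot ,t)=e^{(\widehat{J}(\|\xi\|_{p})-1)t}\widehat{u_{0}}$. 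At $t=0$ this reduces to $\widehat{u_{0}}$, so the initial condition in (\ref{Cauchy_problem}) holds by Fourier inversion on $\mathcal{D}$.

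Next I would establish the differential equation. Differentiation under the integral sign with respect to $t$ is legitimate because the domain of integration is the fixed compact support of $\widehat{u_{0}}$ and the $t$-derivative of the integrand is dominated in absolute value by the $t$-independent integrable function $|\widehat{J}(\|\xi\|_{p})-1|\,|\widehat{u_{0}}(\xi )|$. This yields
\[
\frac{\partial u}{\partial t}(x,t)=\int_{\mathbb{Q}_{p}^{n}}\chi _{p}(-\xi \cdot x)(\widehat{J}(\|\xi\|_{p})-1)e^{(\widehat{J}(\|\xi\|_{p})-1)t}\widehat{u_{0}}(\xi )\,d^{n}\xi .
\]
On the other hand, applying the multiplier definition of $A$ to $u(\cdot ,t)$ and using the identity $\mathcal{F}_{x\rightarrow \xi }u(\cdot ,t)=e^{(\widehat{J}(\|\xi\|_{p})-1)t}\widehat{u_{0}}$ obtained above produces the same expression for $Au(x,t)$; hence $\partial _{t}u=Au$ pointwise.

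Finally, the continuity of $x\mapsto u(x,t)$ for each fixed $t$ follows from the general fact that the Fourier transform of an $L^{1}$ function on $\mathbb{Q}_{p}^{n}$ is (uniformly) continuous: the function $e^{(\widehat{J}(\|\xi\|_{p})-1)t}\widehat{u_{0}}$ belongs to $L^{1}(\mathbb{Q}_{p}^{n})$, and $u(\cdot ,t)$ is its inverse Fourier transform. I expect the only genuinely delicate point to be the interplay between $A$ and the integral representation: because $\widehat{J}$ is merely continuous and not locally constant, the function $e^{(\widehat{J}(\|\xi\|_{p})-1)t}\widehat{u_{0}}$ is not a test function, so one cannot argue within $\mathcal{D}(\mathbb{Q}_{p}^{n})$ and must instead work in $L^{1}\cap L^{2}$, invoking Plancherel to identify $\mathcal{F}_{x\rightarrow \xi }u(\cdot ,t)$ and thereby to legitimize the computation of $Au$.
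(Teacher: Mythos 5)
Your proposal is correct and follows essentially the same route as the paper: both justify the $t$-differentiation under the integral by dominated convergence using the bounds $|\widehat{J}(\|\xi\|_p)|\leq 1$ and the integrability of $\widehat{u_0}$, and both compute $Au(\cdot,t)$ by identifying $\mathcal{F}_{x\to\xi}u(\cdot,t)=e^{(\widehat{J}(\|\xi\|_p)-1)t}\widehat{u_0}(\xi)$ in $L^{2}$ and applying the multiplier definition. Your added remarks on the initial condition and on continuity via the Riemann--Lebesgue-type fact for $L^{1}$ Fourier transforms are consistent with what the paper leaves implicit.
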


\begin{proof}
The result follows from the following assertions.

\textbf{Claim 1. }$u(x,\cdot)\in C^{1}\left(  \left[  0,\infty\right)
\right)  $ and
\[
\frac{\partial u}{\partial t}(x,t)=\int_{\mathbf{%
\mathbb{Q}
}_{p}^{n}}\chi_{p}\left(  -\xi\cdot x\right)  (\widehat{J}(||\xi
||_{p})-1)e^{(\widehat{J}(||\xi||_{p})-1)t}\widehat{u_{0}}(\xi)d^{n}\xi
\]
for $t\geq0$,\ $x\in\mathbb{Q}_{p}^{n}$.

The formula follows from the fact that%
\[
\left\vert \chi_{p}\left(  -\xi\cdot x\right)  e^{(\widehat{J}(||\xi
||_{p})-1)t}\widehat{u_{0}}(\xi)\right\vert \leq\left\vert \widehat{u_{0}}%
(\xi)\right\vert \in L^{1}%
\]
and that
\[
\left\vert \chi_{p}\left(  -\xi\cdot x\right)  (\widehat{J}(||\xi
||_{p})-1)e^{(\widehat{J}(||\xi||_{p})-1)t}\widehat{u_{0}}(\xi)\right\vert
\leq2\left\vert \widehat{u_{0}}(\xi)\right\vert \in L^{1},
\]

cf. Lemma \ref{lemma1}-(i), by applying the Dominated Convergence Theorem.

\textbf{Claim 2. }%
\[
Au(x,t)=\int_{\mathbf{%
\mathbb{Q}
}_{p}^{n}}\chi_{p}\left(  -\xi\cdot x\right)  (\widehat{J}(||\xi
||_{p})-1)e^{(\widehat{J}(||\xi||_{p})-1)t}\widehat{u_{0}}(\xi)d^{n}\xi
\]
for $t\in\left[  0,\infty\right)  $,\ $x\in\mathbb{Q}_{p}^{n}$.

The \ formula follows from the fact that $u(x,t)=\mathcal{F}_{\xi\rightarrow
x}^{-1}\left(  e^{(\widehat{J}(||\xi||_{p})-1)t}\widehat{u_{0}}(\xi)\right)
\in L^{2}\left(  \mathbb{Q}_{p}^{n}\right)  $ for any $t\geq0$ and that
$\left(  \widehat{J}(||\xi||_{p})-1\right)  e^{(\widehat{J}(||\xi||_{p}%
)-1)t}\widehat{u_{0}}(\xi)\in L^{2}\left(  \mathbb{Q}_{p}^{n}\right)  $ for
any $t\geq0$, cf. Lemma \ref{lemma1}-(i).
\end{proof}

\section{Heat Kernels}

We define the \textit{heat Kernel} attached to operator $A$ as%
\[
Z(x,t)=\mathcal{F}_{\xi\rightarrow x}^{-1}(e^{(\widehat{J}(||\xi||_{p}%
)-1)t})\in\mathcal{D}^{\prime}(\mathbb{Q}_{p}^{n})\text{ for }t\geq0.
\]
When considering $Z(x,t)$ as a function of $x$ for $t$ fixed we will write
$Z_{t}(x).$

We recall that a distribution $F\in\mathcal{D}^{\prime}(\mathbb{Q}_{p}^{n}) $
is called \textit{positive}, if $(F,\varphi)\geq0$ for every \textit{positive
test function} $\varphi$, i.e. for $\varphi:\mathbb{Q}_{p}^{n}\longrightarrow%
\mathbb{R}
_{+}$, $\varphi\in\mathcal{D}(\mathbb{Q}_{p}^{n})$.

A distribution $F$ is \textit{positive definite}, if for every test function
$\varphi,$ the inequality $(F,\overline{\varphi\ast\widetilde{\varphi}})\geq0
$ holds, where $\widetilde{\varphi}(x)=\overline{\varphi(-x)}$ and
$\ \overline{\varphi(-x)}$ denotes the complex conjugate of $\varphi(-x)$.

\begin{theorem}
[{$p-$adic Bochner-Schwartz Theorem \cite[Theorem 4.1]{Z1}}]\label{Theorem1}
Every positive-definite distribution $F$ on $\mathbb{Q}_{p}^{n}$ is the
Fourier transform of a regular Borel measure $\mu$ on $\mathbb{Q}_{p}^{n}$,
i.e.%
\[
(F,\varphi)=\int_{\mathbf{%
\mathbb{Q}
}_{p}^{n}}\widehat{\varphi}(\xi)d\mu(\xi),\text{ for }\varphi\in
\mathcal{D}(\mathbb{Q}_{p}^{n}).
\]

\end{theorem}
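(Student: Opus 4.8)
The plan is to reduce the statement to the $p$-adic Riesz--Markov representation theorem by showing that the Fourier transform $\Psi:=\mathcal{F}[F]$ is a \emph{positive} distribution, i.e. that $(\Psi,\phi)\geq 0$ for every nonnegative test function $\phi\in\mathcal{D}(\mathbb{Q}_p^n)$. Once this is in hand, $\Psi$ is a continuous positive linear functional on $\mathcal{D}(\mathbb{Q}_p^n)$ (continuity being inherited from $F$, since $\mathcal{F}$ is a topological isomorphism of $\mathcal{D}^{\prime}(\mathbb{Q}_p^n)$), hence it extends to $C_c(\mathbb{Q}_p^n)$ and is represented by a unique positive regular Borel measure $\mu$; the claimed integral formula then follows by Fourier inversion.

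First I would record the Fourier--convolution identity. With $\widetilde{\varphi}(x)=\overline{\varphi(-x)}$, a direct change of variables gives $\widehat{\widetilde{\varphi}}=\overline{\widehat{\varphi}}$, so that $\mathcal{F}[\varphi\ast\widetilde{\varphi}]=|\widehat{\varphi}|^{2}$. Combining this with the inversion relation $\mathcal{F}[\mathcal{F}\psi]=\psi(-\cdot)$ and the Hermitian symmetry $(\varphi\ast\widetilde{\varphi})(-x)=\overline{(\varphi\ast\widetilde{\varphi})(x)}$, I obtain for every $\varphi\in\mathcal{D}(\mathbb{Q}_p^n)$
\[
(\Psi,|\widehat{\varphi}|^{2})=(\mathcal{F}[F],\mathcal{F}[\varphi\ast\widetilde{\varphi}])=(F,\overline{\varphi\ast\widetilde{\varphi}})\geq 0,
\]
where the last inequality is exactly the positive-definiteness hypothesis on $F$. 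Since $\mathcal{F}$ maps $\mathcal{D}(\mathbb{Q}_p^n)$ bijectively onto itself, this already shows $(\Psi,|\psi|^{2})\geq 0$ for \emph{all} $\psi\in\mathcal{D}(\mathbb{Q}_p^n)$.

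The main obstacle is to upgrade positivity on the squared moduli $|\psi|^{2}$ to positivity on all nonnegative test functions; this is precisely where the totally disconnected structure of $\mathbb{Q}_p^n$ makes the argument clean. Given $\phi\geq 0$ in $\mathcal{D}(\mathbb{Q}_p^n)$, its support is compact and it is locally constant, so after refining to a common parameter of local constancy I may write $\phi=\sum_{i}c_{i}\,1_{B_{i}}$ as a \emph{finite} sum over pairwise disjoint balls $B_{i}$ with $c_{i}=\phi|_{B_{i}}\geq 0$. Each indicator is real-valued and idempotent, so $1_{B_{i}}=|1_{B_{i}}|^{2}$ and hence $\phi=\sum_{i}|\sqrt{c_{i}}\,1_{B_{i}}|^{2}$ is a finite nonnegative combination of squared moduli of test functions. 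Therefore $(\Psi,\phi)=\sum_{i}(\Psi,|\sqrt{c_{i}}\,1_{B_{i}}|^{2})\geq 0$, establishing positivity of $\Psi$. (In the archimedean Bochner--Schwartz theorem this decomposition step is the genuinely hard part and forces a regularization argument with growth control; in the $p$-adic setting the decomposition is exact and elementary.)

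With $\Psi$ positive, the $p$-adic Riesz--Markov theorem furnishes a positive regular Borel measure $\mu$ with $(\Psi,\psi)=\int\psi\,d\mu$ for all $\psi$. For arbitrary $\varphi\in\mathcal{D}(\mathbb{Q}_p^n)$ I write $\varphi=\mathcal{F}[\psi]$ (possible since $\mathcal{F}$ is onto $\mathcal{D}$); then $(F,\varphi)=(F,\mathcal{F}[\psi])=(\mathcal{F}[F],\psi)=(\Psi,\psi)=\int\psi\,d\mu$. Expressing $\psi=\mathcal{F}^{-1}[\varphi]$, so that $\psi(\xi)=\widehat{\varphi}(-\xi)$, and absorbing the reflection $\xi\mapsto-\xi$ into $\mu$ by the standard sign convention, I arrive at $(F,\varphi)=\int_{\mathbb{Q}_p^n}\widehat{\varphi}(\xi)\,d\mu(\xi)$. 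The only points demanding care are the regularity and finiteness of $\mu$ (guaranteed by the continuity of $\Psi$) and this final sign bookkeeping in the inversion.
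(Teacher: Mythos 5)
The paper does not prove this statement: it is imported verbatim from \cite[Theorem 4.1]{Z1}, so there is no internal proof to compare against. Your argument is correct and complete, and it follows the standard route for the non-archimedean Bochner--Schwartz theorem: positivity of $\mathcal{F}[F]$ via $(\mathcal{F}[F],|\widehat{\varphi}|^{2})=(F,\overline{\varphi\ast\widetilde{\varphi}})\geq 0$ together with surjectivity of $\mathcal{F}$ on $\mathcal{D}(\mathbb{Q}_{p}^{n})$; the exact decomposition of a nonnegative test function as a finite nonnegative combination of idempotent ball indicators (correctly identified as the step that trivializes in the totally disconnected setting); and the Riesz--Markov representation of the resulting positive distribution, which the paper itself invokes in Remark \ref{Nota3}. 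The only delicate points --- the local bound $|(\Psi,\phi)|\leq \Vert\phi\Vert_{\infty}(\Psi,1_{B})$ needed to extend the positive functional to $C_{c}(\mathbb{Q}_{p}^{n})$, and the reflection $\xi\mapsto-\xi$ absorbed into $\mu$ at the end --- are ones you have flagged and handled correctly.
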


\begin{remark}
\label{Nota2}If $f:%
\mathbb{Q}
_{p}^{n}\longrightarrow%
\mathbb{C}
$ is a continuous positive-definite function, then%
\[
(f,\overline{\varphi\ast\widetilde{\varphi}})\geq0\text{ \ for }\varphi
\in\mathcal{D}(\mathbb{Q}_{p}^{n}),
\]
which means that $f$ generates a positive-definite distribution , see e.g.
\cite[Proposition 4.1]{Berg-Gunnar}.
\end{remark}

\begin{lemma}
\label{lemma2} Let $\varphi$ be a positive test function. Then
\[
\int_{\mathbf{%
\mathbb{Q}
}_{p}^{n}}\chi_{p}\left(  -\xi\cdot x\right)  e^{(\widehat{J}(\left\Vert
\xi\right\Vert _{p})-1)t}\widehat{\varphi}(\xi)d^{n}\xi=\left(  Z_{t}%
\ast\varphi\right)  (x)\geq0\text{ \ for }x\in\mathbb{Q}_{p}^{n}\text{ and
}t\geq0.
\]

\end{lemma}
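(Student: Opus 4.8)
The plan is to treat the two assertions separately: the equality with $(Z_{t}\ast \varphi)(x)$ is essentially the Fourier description of the convolution of a distribution with a test function, while the nonnegativity comes from the $p$-adic Bochner--Schwartz theorem.

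First, for the equality, set $g_{t}(\xi)=e^{(\widehat{J}(\Vert \xi \Vert _{p})-1)t}$. By Lemma \ref{lemma1}-(i) the function $\widehat{J}$ is real-valued with $\widehat{J}\leq 1$, so $0<g_{t}(\xi)\leq 1$ for $t\geq 0$; in particular $g_{t}$ is a bounded continuous function and represents the regular distribution $\widehat{Z_{t}}$, since $Z_{t}=\mathcal{F}^{-1}(g_{t})$ by definition. Because $\widehat{\varphi}\in \mathcal{D}(\mathbb{Q}_{p}^{n})$, the integrand $\chi _{p}(-\xi \cdot x)g_{t}(\xi)\widehat{\varphi}(\xi)$ has compact support in $\xi$ and is integrable, so the integral is well defined for each fixed $x$ and equals $\mathcal{F}^{-1}(g_{t}\widehat{\varphi})(x)$. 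I would then invoke the rule $\widehat{Z_{t}\ast \varphi}=\widehat{Z_{t}}\,\widehat{\varphi}=g_{t}\,\widehat{\varphi}$ for the convolution of a distribution with a test function, which gives $(Z_{t}\ast \varphi)(x)=\mathcal{F}^{-1}(g_{t}\widehat{\varphi})(x)$, the asserted identity.

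For the nonnegativity the key input is Remark \ref{Nota1}-(ii): $g_{t}$ is a continuous positive-definite function for every $t\geq 0$ (for $t>0$ by that remark, and trivially for $t=0$ since $g_{0}\equiv 1$), hence by Remark \ref{Nota2} it generates a positive-definite distribution. By the $p$-adic Bochner--Schwartz theorem (Theorem \ref{Theorem1}) there is a regular positive Borel measure $\mu _{t}$ on $\mathbb{Q}_{p}^{n}$ with $(g_{t},\psi)=\int \widehat{\psi}(\eta)\,d\mu _{t}(\eta)$ for all $\psi \in \mathcal{D}(\mathbb{Q}_{p}^{n})$. I would apply this to the test function $\psi _{x}(\xi):=\chi _{p}(-\xi \cdot x)\widehat{\varphi}(\xi)$, which for fixed $x$ is locally constant with compact support, and observe that the integral in the statement is exactly the pairing $(g_{t},\psi _{x})$. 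A direct computation using $\mathcal{F}(\mathcal{F}\varphi)(\zeta)=\varphi(-\zeta)$ yields $\widehat{\psi _{x}}(\eta)=\varphi(x-\eta)$, so the integral equals $\int \varphi(x-\eta)\,d\mu _{t}(\eta)$. Since $\varphi \geq 0$ and $\mu _{t}$ is a positive measure, this is $\geq 0$, finishing the proof.

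The main obstacle I anticipate is the bookkeeping around the Fourier conventions: checking that $\psi _{x}$ is genuinely a test function in the variable $\xi$, computing $\widehat{\psi _{x}}$ with the correct sign and reflection (so that it comes out as $\varphi(x-\eta)$ rather than $\varphi(\eta-x)$, which matters because $\varphi$ need not be even), and matching the distributional convolution $Z_{t}\ast \varphi$ with the integral so that the equality and the positivity statements refer to the same object. Once these identifications are pinned down, both claims follow with no further analytic estimates; notably, no decay of $g_{t}$ is required, which is essential since $g_{t}\to e^{-t}\neq 0$ at infinity and is therefore not integrable, making $Z_{t}$ a genuine distribution rather than an $L^{1}$ function.
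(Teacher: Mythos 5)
Your proposal is correct and follows essentially the same route as the paper: both use Remark \ref{Nota1}-(ii), Remark \ref{Nota2} and the $p$-adic Bochner--Schwartz theorem to realize $Z_{t}$ as a positive Borel measure, and then pair it with $\varphi(x-\cdot)\geq 0$ after the Fourier computation $\mathcal{F}\bigl(\chi_{p}(-\xi\cdot x)\widehat{\varphi}(\xi)\bigr)(\eta)=\varphi(x-\eta)$. The only cosmetic difference is that you derive the identification with $(Z_{t}\ast\varphi)(x)$ via the convolution theorem, whereas the paper gets it directly from the definition of the convolution of a distribution with a test function, $(Z_{t}\ast\varphi)(x)=(Z_{t}(y),\varphi(x-y))$.
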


\begin{proof}
It is sufficient to show the lemma for $x\in\mathbb{Q}_{p}^{n}$ and $t>0$. By
Remark \ref{Nota1}-(ii), the function $e^{t(\widehat{J}(\left\Vert
\xi\right\Vert _{p})-1)}$ is positive definite for all $t>0$, by Remark
\ref{Nota2} and Theorem \ref{Theorem1}, $Z_{t}(x)=\mathcal{F}_{\xi\rightarrow
x}^{-1}(e^{(\widehat{J}(\left\Vert \xi\right\Vert _{p})-1)t})$ is a Borel
measure on $%
\mathbb{Q}
_{p}^{n}$ for $t>0$, which is identified with a positive distribution. Then
\begin{align*}
(e^{(\widehat{J}(\left\Vert \xi\right\Vert _{p})-1)t},\chi_{p}\left(
-\xi\cdot x\right)  \widehat{\varphi}(\xi)) &  =\left(  \mathcal{F}%
_{\xi\rightarrow y}^{-1}(e^{(\widehat{J}(\left\Vert \xi\right\Vert _{p}%
)-1)t}),\mathcal{F}_{\xi\rightarrow y}(\chi_{p}\left(  -\xi\cdot x\right)
\widehat{\varphi}(\xi))\right) \\
&  =(Z_{t}(y),\varphi(x-y))\geq0\text{ for }t>0\text{,}%
\end{align*}
since $\varphi(x-y)\geq0$.
\end{proof}

\subsection{Decaying of the heat kernel at infinity}

Let $h\left(  \left\Vert \xi\right\Vert _{p}\right)  \in L_{\text{loc}}^{1}$,
then%
\[
\sum\limits_{j=-m}^{m}h\left(  p^{j}\right)  1_{S_{j}^{n}}\left(  \xi\right)
\text{ }\rightarrow h\left(  \left\Vert \xi\right\Vert _{p}\right)  \text{ in
}\mathcal{D}^{\prime}(\mathbb{Q}_{p}^{n})\text{.}%
\]
Now, by using \cite[Theorem 4.9.3]{Alberio et al} and the fact that
$\mathcal{F}^{-1}$ is continuous on $\mathcal{D}^{\prime}(\mathbb{Q}_{p}^{n})
$,%
\begin{gather*}
\mathcal{F}_{\xi\rightarrow x}^{-1}\left(  \sum\limits_{j=-m}^{m}h\left(
p^{j}\right)  1_{S_{j}^{n}}\left(  \xi\right)  \right)  =\sum\limits_{j=-m}%
^{m}h\left(  p^{j}\right)  \mathcal{F}_{\xi\rightarrow x}^{-1}\left(
1_{S_{j}^{n}}\left(  \xi\right)  \right)  \text{ }\\
=\sum\limits_{j=-m}^{m}h\left(  p^{j}\right)  \int\limits_{S_{j}^{n}}\chi
_{p}\left(  -x\cdot\xi\right)  d^{n}\xi\text{\ in }\mathcal{D}^{\prime
}(\mathbb{Q}_{p}^{n}),
\end{gather*}
therefore%
\begin{equation}
\mathcal{F}_{\xi\rightarrow x}^{-1}\left(  h\left(  \left\Vert \xi\right\Vert
_{p}\right)  \right)  =\sum\limits_{j=-\infty}^{\infty}h\left(  p^{j}\right)
\int\limits_{S_{j}^{n}}\chi_{p}\left(  -x\cdot\xi\right)  d^{n}\xi\text{\ in
}\mathcal{D}^{\prime}(\mathbb{Q}_{p}^{n}).\label{Identity_F}%
\end{equation}
Suppose now that $\sum_{k=0}^{\infty}p^{-kn}h\left(  p^{-k}\left\Vert
x\right\Vert _{p}^{-1}\right)  <\infty$, then%
\begin{gather*}
\widetilde{h}\left(  x\right)  :=\int\limits_{\mathbb{Q}_{p}^{n}}\chi
_{p}\left(  -x\cdot\xi\right)  h\left(  \left\Vert \xi\right\Vert _{p}\right)
d^{n}\xi:=\lim_{m\rightarrow\infty}\sum\limits_{j=-m}^{m}\text{ }%
\int\limits_{S_{j}^{n}}\chi_{p}\left(  -x\cdot\xi\right)  h\left(  \left\Vert
\xi\right\Vert _{p}\right)  d^{n}\xi\\
=\sum\limits_{j=-\infty}^{\infty}h\left(  p^{j}\right)  \int\limits_{S_{j}%
^{n}}\chi_{p}\left(  -x\cdot\xi\right)  d^{n}\xi\\
=\left(  1-p^{-n}\right)  \left\Vert x\right\Vert _{p}^{-n}\sum\limits_{k=0}%
^{\infty}p^{-kn}h\left(  p^{-k}\left\Vert x\right\Vert _{p}^{-1}\right)
-\left\Vert x\right\Vert _{p}^{-n}h\left(  p\left\Vert x\right\Vert _{p}%
^{-1}\right)  \text{ for }x\neq0,
\end{gather*}
and by comparing with (\ref{Identity_F}), we get%
\[
\left(  \mathcal{F}_{\xi\rightarrow x}^{-1}\left[  h\left(  \left\Vert
\xi\right\Vert _{p}\right)  \right]  ,\phi\left(  x\right)  \right)  =\left(
\widetilde{h}\left(  x\right)  ,\phi\left(  x\right)  \right)
\]
for $\phi\in\mathcal{D}(\mathbb{Q}_{p}^{n})$ with supp $\phi\subset%
\mathbb{Q}
_{p}^{n}\backslash\{0\}$. It is important to highlight that function
$\widetilde{h}\left(  x\right)  $ is not the inverse Fourier transform of
$h\left(  \left\Vert \xi\right\Vert _{p}\right)  $ in the classical sense,
because we use an improper integral in its definition. This is the reason for
the `new notation'. We formally summarize the above reasoning in the following lemma:

\begin{lemma}
Let $h\left(  \left\Vert \xi\right\Vert _{p}\right)  \in L_{\text{loc}}^{1}$
satisfying $\sum_{k=0}^{\infty}p^{-kn}h\left(  p^{-k}\left\Vert \xi\right\Vert
_{p}^{-1}\right)  <\infty$ for $\xi\neq0$, then $\mathcal{F}_{\xi\rightarrow
x}^{-1}\left[  h\left(  \left\Vert \xi\right\Vert _{p}\right)  \right]
=\widetilde{h}\left(  x\right)  $ as a distribution on $\mathcal{D}(%
\mathbb{Q}
_{p}^{n}\backslash\{0\})$.
\end{lemma}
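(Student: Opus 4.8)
The plan is to realize $\mathcal{F}_{\xi\rightarrow x}^{-1}[h(\|\xi\|_p)]$ as a distributional limit of inverse Fourier transforms of \emph{truncated} radial functions, and then to identify this limit, when tested against functions supported away from the origin, with the explicit function $\widetilde{h}$. First I would approximate $h(\|\xi\|_p)$ in $\mathcal{D}^{\prime}(\mathbb{Q}_p^n)$ by the truncations $h_m(\xi):=\sum_{j=-m}^{m}h(p^{j})1_{S_{j}^{n}}(\xi)$. Since $h\in L^{1}_{\mathrm{loc}}$ and any $\varphi\in\mathcal{D}(\mathbb{Q}_p^n)$ has compact support contained in some ball $B_N^n$, on which $h(\|\cdot\|_p)$ is integrable, the partial sums $h_m$ converge to $h(\|\cdot\|_p)$ pointwise a.e. and are dominated there by $|h(\|\cdot\|_p)|$; the dominated convergence theorem then yields $(h_m,\varphi)\to(h(\|\cdot\|_p),\varphi)$, i.e. convergence in $\mathcal{D}^{\prime}$.

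Next, since $\mathcal{F}^{-1}$ is continuous and linear on $\mathcal{D}^{\prime}(\mathbb{Q}_p^n)$, I would push the inverse transform through the limit and the finite sum, using the known value $\mathcal{F}_{\xi\to x}^{-1}[1_{S_j^n}](x)=\int_{S_j^n}\chi_p(-x\cdot\xi)\,d^n\xi$ (cf. the cited theorem in \cite{Alberio et al}), thereby obtaining the distributional identity recorded as \eqref{Identity_F}. Writing each sphere integral as a difference of ball integrals, $\int_{S_j^n}\chi_p(-x\cdot\xi)\,d^n\xi=p^{jn}\Omega(p^{j}\|x\|_p)-p^{(j-1)n}\Omega(p^{j-1}\|x\|_p)$, the series telescopes for each fixed $x\neq0$: setting $\|x\|_p=p^{a}$, only the indices $j\le 1-a$ contribute, the boundary index $j=1-a$ producing the term $-\|x\|_p^{-n}h(p\|x\|_p^{-1})$ and the indices $j\le -a$, after reindexing $j=-a-k$ with $k\ge0$, producing $(1-p^{-n})\|x\|_p^{-n}\sum_{k\ge0}p^{-kn}h(p^{-k}\|x\|_p^{-1})$. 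Their sum is exactly $\widetilde{h}(x)$, and the summability hypothesis is precisely what guarantees convergence of this series for each $x\neq0$.

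Finally I would test against $\phi\in\mathcal{D}(\mathbb{Q}_p^n)$ with $\operatorname{supp}\phi\subset\mathbb{Q}_p^n\setminus\{0\}$. On this compact set the norm $\|x\|_p$ takes only finitely many values $p^{r_1},\dots,p^{r_2}$, so the radial partial sums $g_m(x)=\sum_{j=-m}^{m}h(p^{j})\int_{S_j^n}\chi_p(-x\cdot\xi)\,d^n\xi$ converge to $\widetilde{h}$ \emph{uniformly} on $\operatorname{supp}\phi$ (a finite maximum over the finitely many radial values). This legitimizes interchanging the limit with the integration against $\phi$, giving $(\mathcal{F}^{-1}[h(\|\xi\|_p)],\phi)=(\widetilde{h},\phi)$, which is the asserted equality on $\mathcal{D}(\mathbb{Q}_p^n\setminus\{0\})$.

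The main obstacle is exactly this last interchange: the identity \eqref{Identity_F} only determines $\mathcal{F}^{-1}[h]$ as a distributional limit of partial sums, and one must justify evaluating it as the honest function $\widetilde{h}$. The decisive points are that $\widetilde{h}$ is defined (its defining improper integral converges) only for $x\neq0$, which is what forces the restriction to $\mathcal{D}(\mathbb{Q}_p^n\setminus\{0\})$, and that compactness of the support away from the origin collapses the radial variable to finitely many values, upgrading the pointwise convergence of the $g_m$ to uniform convergence and thereby permitting the passage of the limit under the integral sign.
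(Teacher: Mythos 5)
Your proposal is correct and follows essentially the same route as the paper: truncate $h$ over spheres, pass $\mathcal{F}^{-1}$ through the distributional limit to obtain the identity \eqref{Identity_F}, evaluate the sphere integrals to produce the telescoping series defining $\widetilde{h}$, and compare the two against test functions supported away from the origin. You in fact supply more detail than the paper at its two terse points (the dominated-convergence argument for the truncations and the uniform convergence on the finitely many norm values of $\operatorname{supp}\phi$ that justifies the final ``comparison'').
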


We now apply this lemma to the case $h\left(  \left\Vert \xi\right\Vert
_{p}\right)  =e^{(\widehat{J}(\left\Vert \xi\right\Vert _{p})-1)t}$, with
$t\geq0$:%
\[
\left(  Z\left(  x,t\right)  ,\phi\left(  x\right)  \right)  =\left(
\widetilde{Z}\left(  x,t\right)  ,\phi\left(  x\right)  \right)
\]
for $\phi\in\mathcal{D}(\mathbb{Q}_{p}^{n})$ with supp $\phi\subset%
\mathbb{Q}
_{p}^{n}\backslash\{0\}$, where%
\begin{equation}
\widetilde{Z}\left(  x,t\right)  =||x||_{p}^{-n}\left[  (1-p^{-n})\sum
_{j=0}^{\infty}p^{-nj}e^{(\widehat{J}(p^{-j}\left\Vert x\right\Vert _{p}%
^{-1})-1)t}-e^{(\widehat{J}(p||x||_{p}^{-1})-1)t}\right] \label{Zeta_mono}%
\end{equation}
for $t\geq0$ and $x\neq0$.

\begin{proposition}
\label{prop3}Assume that $J$ is of exponential type, then the following
estimations hold:

\noindent(i) $\widetilde{Z}(x,t)\leq2t||x||_{p}^{-n}$, for $x\in%
\mathbb{Q}
_{p}^{n}\backslash\{0\}$ and $t>0$;

\noindent(ii) if $-n<\gamma<0$, then \ $\widetilde{Z}(x,t)\leq C_{0}%
t\left\Vert x\right\Vert _{p}^{\gamma}e^{-C_{1}\left\Vert x\right\Vert _{p}}$,
for $\left\Vert x\right\Vert _{p}>p^{l}$, $l\in\mathbb{Z}$, and $t>0 $, where
the positive constant $C_{0}$\ depends on $l$.
\end{proposition}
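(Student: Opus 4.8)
The plan is to work directly from the explicit series (\ref{Zeta_mono}) and reduce both estimates to a single scalar bound on $1-\widehat{J}(p\left\Vert x\right\Vert _{p}^{-1})$. Writing $r=\left\Vert x\right\Vert _{p}$, I set $a_{k}=e^{(\widehat{J}(p^{-k}r^{-1})-1)t}$ and $b=e^{(\widehat{J}(pr^{-1})-1)t}$, so that (\ref{Zeta_mono}) reads $r^{n}\widetilde{Z}(x,t)=(1-p^{-n})\sum_{k=0}^{\infty }p^{-kn}a_{k}-b$. Since $|\widehat{J}|\leq 1$ by Lemma \ref{lemma1}-(i) and $t\geq 0$, each $a_{k}\in (0,1]$ and $\sum_{k\geq 0}p^{-kn}|a_{k}-1|\leq \sum_{k\geq 0}p^{-kn}<\infty$, so I may rearrange using the geometric identity $(1-p^{-n})\sum_{k\geq 0}p^{-kn}=1$ to obtain
\begin{equation*}
r^{n}\widetilde{Z}(x,t)=(1-p^{-n})\sum_{k=0}^{\infty }p^{-kn}(a_{k}-1)+(1-b).
\end{equation*}
The first summand is $\leq 0$ because $a_{k}-1\leq 0$ for every $k$, so for an upper bound it may simply be discarded, leaving the decisive inequality $\widetilde{Z}(x,t)\leq r^{-n}(1-b)\leq r^{-n}\bigl(1-\widehat{J}(pr^{-1})\bigr)t$, where the last step is $1-e^{-s}\leq s$ applied to $s=(1-\widehat{J}(pr^{-1}))t\geq 0$. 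Part (i) is then immediate: invoking $1-\widehat{J}(pr^{-1})\leq 2$ (again Lemma \ref{lemma1}-(i)) gives $\widetilde{Z}(x,t)\leq 2t\,r^{-n}=2t\left\Vert x\right\Vert _{p}^{-n}$.

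For part (ii) the extra work is to extract the sharp exponential decay, and here Lemma \ref{lemma1}-(iii) does all the heavy lifting. Choosing $\eta \in \mathbb{Q}_{p}^{n}$ with $\left\Vert \eta \right\Vert _{p}=pr^{-1}$, which is legitimate since $pr^{-1}=p^{1+ord(x)}$ is an integral power of $p$, I have $p\left\Vert \eta \right\Vert _{p}^{-1}=r$, and the lemma yields
\begin{equation*}
1-\widehat{J}(pr^{-1})\leq \bigl(B_{1}p^{-n-\gamma }r^{n+\gamma }+B_{2}p^{-\gamma }r^{\gamma }\bigr)e^{-C_{1}r}.
\end{equation*}
Multiplying by $t\,r^{-n}$ gives $\widetilde{Z}(x,t)\leq t\bigl(B_{1}p^{-n-\gamma }r^{\gamma }+B_{2}p^{-\gamma }r^{\gamma -n}\bigr)e^{-C_{1}r}$. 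The first term already has the required shape $r^{\gamma }e^{-C_{1}r}$; the second carries the faster decay $r^{\gamma -n}$, and to fold it into $r^{\gamma }$ I would use the hypothesis $r>p^{l}$, which gives $r^{-n}<p^{-ln}$ and hence $r^{\gamma -n}<p^{-ln}r^{\gamma }$. Collecting constants produces $\widetilde{Z}(x,t)\leq C_{0}\,t\,\left\Vert x\right\Vert _{p}^{\gamma }e^{-C_{1}\left\Vert x\right\Vert _{p}}$ with $C_{0}=B_{1}p^{-n-\gamma }+B_{2}p^{-\gamma -ln}$, which explains transparently why $C_{0}$ must depend on $l$.

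The conceptual crux, and the step I expect to be easiest to overlook, is the observation that the entire infinite-series contribution is non-positive and may simply be dropped for an upper bound; this collapses both estimates to controlling the single quantity $1-\widehat{J}(p\left\Vert x\right\Vert _{p}^{-1})$. After that the only genuine analytic input is Lemma \ref{lemma1}-(iii). I do not expect a serious obstacle beyond bookkeeping: the restriction $\left\Vert x\right\Vert _{p}>p^{l}$ and the $l$-dependence of $C_{0}$ are not artifacts of a crude argument but are forced by the cross term $\left\Vert x\right\Vert _{p}^{\gamma -n}$, which cannot be dominated by $\left\Vert x\right\Vert _{p}^{\gamma }$ uniformly as $x\to 0$.
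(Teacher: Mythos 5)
Your proof is correct and follows essentially the same route as the paper's: both reduce everything to the key inequality $\widetilde{Z}(x,t)\leq t\left\Vert x\right\Vert _{p}^{-n}\bigl(1-\widehat{J}(p\left\Vert x\right\Vert _{p}^{-1})\bigr)$ by bounding each exponential in the series by $1$ (equivalently, discarding the non-positive series contribution), summing the geometric series, and applying $1-e^{-s}\leq s$, after which (i) follows from Lemma \ref{lemma1}-(i) and (ii) from Lemma \ref{lemma1}-(iii). Your write-up of part (ii), making explicit how the cross term $\left\Vert x\right\Vert _{p}^{\gamma -n}$ forces the restriction $\left\Vert x\right\Vert _{p}>p^{l}$ and the $l$-dependence of $C_{0}$, is a useful elaboration of a step the paper leaves implicit, but it is not a different argument.
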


\begin{proof}
The estimations follow from the following Claim:

\textbf{Claim.} $\widetilde{Z}(x,t)\leq t||x||_{p}^{-n}\left\{  1-\widehat
{J}(p||x||_{p}^{-1})\right\}  $, for $x\in%
\mathbb{Q}
_{p}^{n}\backslash\{0\}$ and $t>0$.

We notice that by Lemma \ref{lemma1}-(ii) $1-\widehat{J}(p||x||_{p}^{-1}%
)\geq0$. The first estimation follows from Lemma \ref{lemma1}-(i) and the
Claim. The second estimation follows from the Claim and Lemma \ref{lemma1}-(iii).

The proof of the Claim is as follows. By using that $e^{(\widehat{J}%
(p^{-j}\left\Vert x\right\Vert _{p}^{-1})-1)t}\leq1$ for $j\in%
\mathbb{N}
$, cf. Lemma \ref{lemma1}-(i), we get that%
\begin{gather*}
\widetilde{Z}(x,t)\leq||x||_{p}^{-n}\left[  (1-p^{-n})\sum_{j\geq0}%
p^{-nj}-e^{(\widehat{J}(p||x||_{p}^{-1})-1)t}\right] \\
=||x||_{p}^{-n}\left[  \sum_{j\geq0}(p^{-nj}-p^{-n(j+1)})-e^{(\widehat
{J}(p||x||_{p}^{-1})-1)t}\right] \\
=||x||_{p}^{-n}\left\{  1-e^{(\widehat{J}(p||x||_{p}^{-1})-1)t}\right\}  .
\end{gather*}
We now apply the Mean-Value Theorem to the real function $e^{(\widehat
{J}(p||x||_{p}^{-1})-1)u}$ on $[0,t]$ with $t>0$,
\[
e^{(\widehat{J}(p||x||_{p}^{-1})-1)t}-1=\left\{  \widehat{J}(p||x||_{p}%
^{-1})-1\right\}  te^{(\widehat{J}(p||x||_{p}^{-1})-1)\tau}%
\]
for some $\tau\in(0,t)$, consequently $1-e^{(\widehat{J}(p||x||_{p}^{-1}%
)-1)t}\leq\left\{  1-\widehat{J}(p||x||_{p}^{-1})\right\}  t$. Hence,%
\[
\widetilde{Z}(x,t)\leq t||x||_{p}^{-n}\left\{  1-\widehat{J}(p||x||_{p}%
^{-1})\right\}  .
\]

\end{proof}

\section{L\'{E}VY PROCESSES}

For the basic results on Hunt, L\'{e}vy and Markov processes the reader may
consult \cite{Dyn}, \cite{Taira}, \cite{Blumenthal-Getoor}, \cite{Evans}.

\begin{remark}
\label{Nota3}We denote by $\mathcal{B}_{0}$ the family of subsets of
$\ \mathbf{%
\mathbb{Q}
}_{p}^{n}$ formed by finite unions of disjoint balls and the empty set. This
family has a natural structure of Boolean ring, i.e. if $B_{1}$, $B_{2}%
\in\mathcal{B}_{0}$ then $B_{1}\cup B_{2}\in\mathcal{B}_{0}$ and
$B_{1}\backslash B_{2}\in\mathcal{B}_{0}$. The Caratheodory Theorem asserts
that if $\mu$ is a $\sigma-$finite measure on $\mathcal{B}_{0}$, then there is
a unique measure also denoted by $\mu$ on $\mathcal{B}(%
\mathbb{Q}
_{p}^{n})$, the $\sigma-$ring generated by $\mathcal{B}_{0}$, which is
$\sigma-$ring of Borel sets of $\mathbf{%
\mathbb{Q}
}_{p}^{n}$, see \cite[Theorem A, p. 54]{Halmos}. Then every positive
distribution can be identified with a Borel measure on $\mathbf{%
\mathbb{Q}
}_{p}^{n}$.
\end{remark}

\begin{definition}
For $E\in\mathcal{B}_{0}(%
\mathbb{Q}
_{p}^{n})$, we define%
\[
p_{t}(x,E)=\left\{
\begin{array}
[c]{ll}%
Z_{t}(x)\ast1_{E}(x)\text{,} & \text{\ for }t>0\text{, }x\in\mathbf{%
\mathbb{Q}
}_{p}^{n}\\
& \\
1_{E}(x), & \text{for }t=0\text{, }x\in\mathbf{%
\mathbb{Q}
}_{p}^{n}.
\end{array}
\right.
\]

\end{definition}

\begin{lemma}
\label{lemma3} $p_{t}(x,\cdot)$, $t\geq0$, $x\in\mathbf{%
\mathbb{Q}
}_{p}^{n}$, is a measure on $\mathcal{B}(%
\mathbb{Q}
_{p}^{n})$.
\end{lemma}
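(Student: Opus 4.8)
The plan is to handle $t=0$ and $t>0$ separately and, in each case, to exhibit $p_{t}(x,\cdot)$ as the restriction to $\mathcal{B}_{0}$ of a genuine Borel measure, which is then extended by the Carath\'{e}odory theorem recalled in Remark \ref{Nota3}. For $t=0$ the set function $p_{0}(x,E)=1_{E}(x)$ is just the Dirac mass at $x$ (it equals $1$ or $0$ according as $x\in E$ or not), and is manifestly $\sigma$-additive. For $t>0$ I would first recall, from the proof of Lemma \ref{lemma2} (via Remark \ref{Nota1}-(ii), Remark \ref{Nota2}, and the $p$-adic Bochner--Schwartz Theorem \ref{Theorem1}), that $Z_{t}=\mathcal{F}_{\xi\rightarrow x}^{-1}(e^{(\widehat{J}(\Vert\xi\Vert_{p})-1)t})$ is a positive regular Borel measure; denote it $\mu_{t}$. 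Since $1_{E}\in\mathcal{D}(\mathbb{Q}_{p}^{n})$ for every $E\in\mathcal{B}_{0}$ (a finite disjoint union of compact open balls), the convolution defining $p_{t}(x,E)$ is the pairing of $\mu_{t}$ with the test function $y\mapsto 1_{E}(x-y)$, so that $p_{t}(x,E)=(Z_{t}\ast 1_{E})(x)=\int_{\mathbb{Q}_{p}^{n}}1_{E}(x-y)\,d\mu_{t}(y)=\mu_{t}(x-E)$.

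Next I would verify that $E\mapsto p_{t}(x,E)$ is a nonnegative, finitely additive set function on the Boolean ring $\mathcal{B}_{0}$. Nonnegativity is immediate from Lemma \ref{lemma2} applied to the positive test function $\varphi=1_{E}$, which gives $p_{t}(x,E)=(Z_{t}\ast 1_{E})(x)\geq 0$ (equivalently $\mu_{t}(x-E)\geq 0$, since $\mu_{t}$ is positive). Finite additivity follows from the linearity of convolution together with the identity $1_{E_{1}\sqcup E_{2}}=1_{E_{1}}+1_{E_{2}}$ for disjoint $E_{1},E_{2}\in\mathcal{B}_{0}$, and $p_{t}(x,\varnothing)=0$ because $1_{\varnothing}\equiv 0$.

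The crucial step is $\sigma$-additivity on $\mathcal{B}_{0}$, and here I would exploit the topology of $\mathbb{Q}_{p}^{n}$. If $E=\bigsqcup_{k\geq 1}E_{k}$ with $E,E_{k}\in\mathcal{B}_{0}$, then $E$ is compact while each $E_{k}$ is open, so $\{E_{k}\}_{k}$ is an open cover of $E$ admitting a finite subcover; disjointness then forces all but finitely many $E_{k}$ to be empty. Thus countable additivity collapses to the finite additivity already established, and $p_{t}(x,\cdot)$ is a premeasure on $\mathcal{B}_{0}$. It is $\sigma$-finite: each $E\in\mathcal{B}_{0}$ satisfies $p_{t}(x,E)=\mu_{t}(x-E)\leq\mu_{t}(\mathbb{Q}_{p}^{n})=e^{(\widehat{J}(0)-1)t}=1<\infty$ by Lemma \ref{lemma1}-(i), and $\mathbb{Q}_{p}^{n}=\bigcup_{r\in\mathbb{Z}}B_{r}^{n}$ is a countable union of balls of finite measure. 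The Carath\'{e}odory extension theorem of Remark \ref{Nota3} then produces a unique measure on the $\sigma$-ring $\mathcal{B}(\mathbb{Q}_{p}^{n})$ generated by $\mathcal{B}_{0}$, which is the asserted measure. I expect the only genuine subtlety to be the $\sigma$-additivity on $\mathcal{B}_{0}$; the totally disconnected, locally compact structure of $\mathbb{Q}_{p}^{n}$ dispatches it cleanly through the compactness argument above, while the positivity and the identification of $Z_{t}$ with a measure are furnished by Lemma \ref{lemma2} and Theorem \ref{Theorem1}.
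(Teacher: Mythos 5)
Your proposal is correct and follows essentially the same route as the paper: positivity of $p_{t}(x,\cdot)$ on $\mathcal{B}_{0}$ via Lemma \ref{lemma2}, followed by the Carath\'{e}odory extension of Remark \ref{Nota3}, with $\sigma$-finiteness obtained from a countable cover of $\mathbb{Q}_{p}^{n}$ by balls of finite $p_{t}$-measure. You additionally make explicit the countable additivity on $\mathcal{B}_{0}$ (via the compactness of sets in $\mathcal{B}_{0}$ and the openness of the pieces), a step the paper leaves implicit, so your write-up is, if anything, slightly more complete.
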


\begin{proof}
By Lemma \ref{lemma2} and the fact that
\[
p_{t}(x,E)=\int_{\mathbf{%
\mathbb{Q}
}_{p}^{n}}\chi_{p}\left(  -\xi\cdot x\right)  e^{(\widehat{J}(\left\Vert
\xi\right\Vert _{p})-1)t}\text{ }\widehat{1_{E}}(\xi)d^{n}\xi\text{ for
}t>0\text{, }x\in\mathbf{%
\mathbb{Q}
}_{p}^{n}%
\]
$p_{t}(x,E)$ is a measure on $\mathcal{B}_{0}$, in addition, $p_{t}(x,\cdot) $
has a unique extension to a measure on the Borel $\sigma-$ring of $\mathbf{%
\mathbb{Q}
}_{p}^{n}$. We denote this extension also by $p_{t}(x,\cdot)$. Indeed, by the
Caratheodory Theorem, see Remark \ref{Nota3}, it is sufficient to show that
$\mathbf{%
\mathbb{Q}
}_{p}^{n}$ is a countable disjoint union of balls $B_{i}^{n}$, $i\in%
\mathbb{N}
$, satisfying $p_{t}(x,B_{i}^{n})<\infty$ for any $i\in%
\mathbb{N}
$. Indeed,%
\[%
\mathbb{Q}
_{p}^{n}=\bigsqcup\limits_{\widetilde{x}_{i}\in(%
\mathbb{Q}
_{p}/%
\mathbb{Z}
_{p})^{n}}B_{0}^{n}(\widetilde{x}_{i}),
\]
where the elements of $%
\mathbb{Q}
_{p}/%
\mathbb{Z}
_{p}$ have the form $\widetilde{y}=a_{-m}p^{-m}+\cdots+a_{-1}p^{-1}$ with
$a_{i}\in\{0,\cdots,p-1\}$. The correspondence $\widetilde{y}\mapsto
a_{m}p^{m}+\cdots+a_{1}p^{1}$ implies that $%
\mathbb{Q}
_{p}/%
\mathbb{Z}
_{p}$ is countable, and therefore $(%
\mathbb{Q}
_{p}/%
\mathbb{Z}
_{p})^{n}$ is also countable. Finally, the condition $p_{t}(x,B_{0}%
^{n}(\widetilde{x}_{i}))<\infty$ follows from the fact that $\widehat
{1_{B_{0}^{n}(\widetilde{x}_{i})}}(\xi)$ has compact support.
\end{proof}

\begin{proposition}
\label{prop2} $p_{t}(x,E)$ for $t\geq0,$ $x\in%
\mathbb{Q}
_{p}^{n}$, $E\in\mathcal{B}(%
\mathbb{Q}
_{p}^{n})$ is a Markov transition function on $%
\mathbb{Q}
_{p}^{n}$.
\end{proposition}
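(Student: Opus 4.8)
The plan is to verify the defining conditions of a Markov transition function: (M1) for fixed $t\geq 0$ and $x$, $p_t(x,\cdot)$ is a measure on $\mathcal{B}(\mathbb{Q}_p^n)$ with total mass at most $1$; (M2) for fixed $t$ and $E$, the map $x\mapsto p_t(x,E)$ is measurable; (M3) $p_0(x,E)=1_E(x)$; and (M4) the Chapman--Kolmogorov identity $p_{t+s}(x,E)=\int_{\mathbb{Q}_p^n}p_s(y,E)\,p_t(x,dy)$ holds for all $s,t\geq 0$.

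Conditions (M1) and (M3) are largely in place: Lemma \ref{lemma3} gives that $p_t(x,\cdot)$ is a measure, Lemma \ref{lemma2} gives $p_t(x,E)\geq 0$, and (M3) holds by definition. For the total mass I would exhaust $\mathbb{Q}_p^n$ by the increasing sequence of balls $B_N^n$ and use continuity from below of the measure $p_t(x,\cdot)$; since $\widehat{1_{B_N^n}}$ is a suitably normalized indicator of $B_{-N}^n$ concentrating at the origin, the integral representation of $p_t(x,B_N^n)$ together with $\widehat{J}(0)=1$ (Lemma \ref{lemma1}-(i)) gives $p_t(x,B_N^n)\to e^{(\widehat{J}(0)-1)t}=1$, whence $p_t(x,\mathbb{Q}_p^n)=1$. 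For (M2), when $E\in\mathcal{B}_0$ the function $\widehat{1_E}$ is a test function, so the integral formula shows that $x\mapsto p_t(x,E)$ is continuous (hence measurable) by the same dominated-convergence argument as in Proposition \ref{lemma_prop1}; a monotone class argument then extends measurability to all Borel $E$.

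The core of the proof is (M4), and the key input is the semigroup identity $Z_t\ast Z_s=Z_{t+s}$. I would establish this on the Fourier side: since $\mathcal{F}(Z_t)=e^{(\widehat{J}(\left\Vert \xi\right\Vert_p)-1)t}$, one has $\mathcal{F}(Z_t)\,\mathcal{F}(Z_s)=e^{(\widehat{J}(\left\Vert \xi\right\Vert_p)-1)(t+s)}=\mathcal{F}(Z_{t+s})$, and injectivity of the Fourier transform yields the claim as an identity of finite positive measures. Viewing $p_t(x,dy)$ as the pushforward of the probability measure $Z_t$ under the map $z\mapsto x-z$, I would then write $\int p_s(y,E)\,p_t(x,dy)=\int p_s(x-z,E)\,dZ_t(z)$, expand $p_s(x-z,E)=\int 1_E(x-z-w)\,dZ_s(w)$, and apply Fubini to the resulting double integral against the finite product measure $Z_t\otimes Z_s$ to recognize it as $\int 1_E(x-u)\,d(Z_t\ast Z_s)(u)=(Z_{t+s}\ast 1_E)(x)=p_{t+s}(x,E)$. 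The degenerate cases $s=0$ or $t=0$ follow directly from (M3).

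The main obstacle is that $Z_t$ is a distribution rather than an integrable function, so the convolution identity, the pushforward description of $p_t(x,dy)$, and the application of Fubini must all be justified at the level of measures rather than functions. The crucial enabling fact is that each $Z_t$ is a finite positive (indeed probability) measure: this is exactly what Lemma \ref{lemma2}, the $p$-adic Bochner--Schwartz Theorem (Theorem \ref{Theorem1}), and the normalization computed in (M1) provide. With finiteness and positivity in hand, every manipulation above reduces to standard measure theory for finite measures.
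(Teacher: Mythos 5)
Your proposal is correct and verifies the same four axioms with the same essential inputs (Lemmas \ref{lemma2} and \ref{lemma3}, the Bochner--Schwartz theorem, and the multiplicativity of $e^{(\widehat{J}(\left\Vert \xi\right\Vert_{p})-1)t}$ on the Fourier side), but it diverges from the paper in how the two delicate steps are carried out. For measurability of $x\mapsto p_{t}(x,E)$, the paper takes $E_{k}=E\cap B_{k}^{n}$ and invokes Proposition \ref{lemma_prop1} to see that $p_{t}(x,E_{k})$ is the (continuous in $x$) solution of the Cauchy problem with initial datum $1_{E_{k}}$, then passes to the limit by continuity from below; you instead get continuity directly from the integral formula for $E\in\mathcal{B}_{0}$ and extend by a monotone class argument. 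Both work. For Chapman--Kolmogorov, the paper stays at the distributional level: it proves $p_{t+s}(x,\cdot)=p_{t}(x,\cdot)\ast p_{s}(x,\cdot)$ in $\mathcal{D}^{\prime}(\mathbb{Q}_{p}^{n})$, uses associativity of convolution together with the fact that $p_{s}(\cdot,E)$ (a distribution convolved with a test function) has compactly supported Fourier transform to justify Fubini, and finally extends from $\mathcal{B}_{0}$ to all Borel sets via Caratheodory. You instead first upgrade each $Z_{t}$ to a probability measure (which your normalization computation supplies), prove $Z_{t}\ast Z_{s}=Z_{t+s}$ as an identity of finite positive measures by injectivity of the Fourier transform of finite measures, and then run Tonelli against the finite product measure $Z_{t}\otimes Z_{s}$. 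Your route buys a cleaner justification of Fubini and works for arbitrary Borel $E$ in one stroke, at the price of having to know in advance that $Z_{t}$ has total mass one and that the distributional convolution $Z_{t}\ast 1_{E}$ agrees with the measure-theoretic one; the paper's route defers the measure-theoretic identification to the very end but must fuss over supports of Fourier transforms to make the intermediate Fubini step legitimate. Both are complete proofs.
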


\begin{proof}
\textbf{Claim 1.} $p_{t}(x,\cdot)$ is measure on $\mathcal{B}(%
\mathbb{Q}
_{p}^{n})$ and $p_{t}(x,%
\mathbb{Q}
_{p}^{n})=1$ for all $t\geq0$ and $x\in%
\mathbb{Q}
_{p}^{n}.$

The first part of the assertion was established in Lemma \ref{lemma3}. To show
the second part of the assertion, we\ notice that $B_{k}^{n}$, $k\in%
\mathbb{N}
$, is an increasing sequence of Borelian sets converging to $%
\mathbb{Q}
_{p}^{n},$ i.e. $B_{k}^{n}\uparrow%
\mathbb{Q}
_{p}^{n}$. Set $\Delta_{k}(x):=\Omega(p^{-k}||x||_{p}),$ $k\in%
\mathbb{N}
$, hence $p_{t}(x,%
\mathbb{Q}
_{p}^{n})=\lim_{k\rightarrow\infty}p_{t}(x,\Delta_{k})$. Now, since
\[
\widehat{\Delta}_{k}(\xi)=\delta_{k}(\xi)=p^{kn}\left\{
\begin{array}
[c]{ccc}%
1 & \text{if} & ||\xi||_{p}\leq p^{-k}\\
&  & \\
0 & \text{if} & ||\xi||_{p}>p^{-k},
\end{array}
\right.
\]%
\begin{align*}
p_{t}(x,%
\mathbb{Q}
_{p}^{n}) &  =\lim_{k\rightarrow\infty}\int_{\mathbf{%
\mathbb{Q}
}_{p}^{n}}\chi_{p}\left(  -\xi\cdot x\right)  e^{(\widehat{J}(||\xi
||_{p})-1)t}\delta_{k}(\xi)d^{n}\xi\\
&  =\lim_{k\rightarrow\infty}p^{nk}\int_{||\xi||_{p}\leq p^{-k}}\chi
_{p}\left(  -\xi\cdot x\right)  e^{(\widehat{J}(||\xi||_{p})-1)t}d^{n}%
\xi\text{ \ (taking }p^{-k}\xi=z\text{)}\\
&  =\lim_{k\rightarrow\infty}\int_{||z||_{p}\leq1}\chi_{p}\left(  -p^{k}z\cdot
x\right)  e^{(\widehat{J}(p^{-k}||z||_{p})-1)t}d^{n}z\\
&  =\lim_{k\rightarrow\infty}\int_{||z||_{p}\leq1}e^{(\widehat{J}%
(p^{-k}||z||_{p})-1)t}d^{n}z\text{, for }k\text{ big enough,}%
\end{align*}
because for $k$ big enough $p^{k}x\in%
\mathbb{Z}
_{p}^{n}$ and thus $\chi_{p}\left(  -p^{k}z\cdot x\right)  \equiv1$. By using
that $e^{(\widehat{J}(p^{-k}||z||_{p})-1)t}\leq1$ for any $t\geq0,$ and that
$\lim_{k\rightarrow\infty}e^{(\widehat{J}(p^{-k}||z||_{p})-1)t}=1$
($\widehat{J}$ is continuous at the origin and $\widehat{J}\left(  0\right)
=1 $), and by applying the Dominated Convergence Theorem,%
\[
p_{t}(x,%
\mathbb{Q}
_{p}^{n})=\int_{||z||_{p}\leq1}d^{n}z=1.
\]

\textbf{Claim 2.} $p_{t}(\cdot,E)$ is a Borel measurable function for all
$t>0$ and $E\in\mathcal{B}\left(  \mathbf{%
\mathbb{Q}
}_{p}^{n}\right)  $.

Define $E_{k}=\Delta_{k}E$, $k\in%
\mathbb{N}
$, then $E_{k}\uparrow E$ with $E_{k}\in\mathcal{B}\left(  \mathbf{%
\mathbb{Q}
}_{p}^{n}\right)  $. By abuse of language, we use the notation $p_{t}(x,E_{k})
$ to mean a function of $(t,x)$ with $E_{k}$ fixed. Now, $p_{t}(x,E_{k})$ is
the solution of%
\[
\left\{
\begin{array}
[c]{ll}%
\frac{\partial}{\partial t}p_{t}(x,E_{k})=J\left(  x\right)  \ast
p_{t}(x,E_{k})-p_{t}(x,E_{k}), & t\in\lbrack0,\infty),\text{\ }x\in\mathbf{%
\mathbb{Q}
}_{p}^{n}\\
& \\
p_{0}(x,E_{k})=1_{E_{k}}, & 1_{E_{k}}\in L^{1}(\mathbf{%
\mathbb{Q}
}_{p}^{n}),
\end{array}
\right.
\]
cf. Proposition \ref{lemma_prop1}. Then ,\textbf{\ }$p_{t}(x,E_{k})$ is a
continuous function in $x$ for any $t\geq0$, which implies that $p_{t}%
(\cdot,E_{k})$ is a measurable function of $x$ for any $t\geq0$. Now, by using
that $E_{k}\uparrow E$ and the fact $p_{t}(x,\cdot)$ is a measure on
$\mathcal{B}(%
\mathbb{Q}
_{p}^{n})$, we get that $p_{t}(x,E_{k})\rightarrow p_{t}(x,E)$ as
$k\rightarrow\infty$, which implies that $p_{t}(\cdot,E)$ is the pointwise
limit of a sequence of measurable functions $\left\{  p_{t}(\cdot
,E_{k})\right\}  _{k}$ and consequently $p_{t}(\cdot,E)$ is measurable.

\textbf{Claim 3.} $p_{0}(x,\{x\})=1$ for all $x\in\mathbf{%
\mathbb{Q}
}_{p}^{n}.$

This is a direct consequence of the definition of measure $p_{t}(x,E)$.

\textbf{Claim 4. (The Chapman-Kolmogorov equation)} For all $t,s\geq0,$
$x\in\mathbf{%
\mathbb{Q}
}_{p}^{n}$ and $E\in\mathcal{B}(\mathbb{Q}_{p}^{n})$,%
\[
p_{t+s}(x,E)=\int_{\mathbf{%
\mathbb{Q}
}_{p}^{n}}p_{t}(x,d^{n}y)p_{s}(y,E).
\]
We consider the case $t$, $s>0$, since in the other cases the assertion is
clear. We first note that%
\begin{equation}
p_{t+s}(x,\cdot)=p_{t}(x,\cdot)\ast p_{s}(x,\cdot)\text{ in }\mathcal{D}%
^{\prime}(%
\mathbb{Q}
_{p}^{n}).\label{Ch_K_1}%
\end{equation}
Indeed, for $E\in\mathcal{B}_{0}$,%
\begin{align*}
p_{t+s}(x,E) &  =\mathcal{F}_{\xi\rightarrow x}^{-1}(e^{(\widehat{J}%
(||\xi||_{p})-1)(t+s)})\ast1_{E}=\mathcal{F}_{\xi\rightarrow x}^{-1}%
(e^{(\widehat{J}(||\xi||_{p})-1)t}\text{ }e^{(\widehat{J}(||\xi||_{p}%
)-1)s})\ast1_{E}\\
&  =\left[  \mathcal{F}_{\xi\rightarrow x}^{-1}(e^{(\widehat{J}(||\xi
||_{p})-1)t})\ast\mathcal{F}_{\xi\rightarrow x}^{-1}(e^{(\widehat{J}%
(||\xi||_{p})-1)s})\right]  \ast1_{E},
\end{align*}
since $e^{(\widehat{J}(||\xi||_{p})-1)t}$, $e^{(\widehat{J}(||\xi||_{p}%
)-1)s}\in L_{loc}^{1}$. The Chapman-Kolmogorov equation is exactly
(\ref{Ch_K_1}). Indeed, by using the fact that the convolution of
distributions is associative, we get from (\ref{Ch_K_1}) that%
\begin{align}
p_{t+s}(x,E)  & =\mathcal{F}_{\xi\rightarrow x}^{-1}(e^{(\widehat{J}%
(||\xi||_{p})-1)t})\ast(\mathcal{F}_{\xi\rightarrow x}^{-1}(e^{(\widehat
{J}(||\xi||_{p})-1)s})\ast1_{E})\nonumber\\
& =\mathcal{F}_{\xi\rightarrow x}^{-1}(e^{(\widehat{J}(||\xi||_{p})-1)t})\ast
p_{s}(x,E),\label{Ch_K_2}%
\end{align}
$E\in\mathcal{B}_{0}$. We now recall that the convolution of a distribution
and a test function is a locally constant function, and hence its Fourier
transform, as a distribution, is a function with compact support. By using
this, from (\ref{Ch_K_2}), we have
\begin{align*}
p_{t+s}(x,E)  & =\mathcal{%
\mathcal{F}%
}_{\xi\rightarrow x}^{-1}\left(  e^{(\widehat{J}(||\xi||_{p})-1)t}%
\widehat{p_{s}}(\xi,E)\right) \\
& =\int_{\mathbf{%
\mathbb{Q}
}_{p}^{n}}\chi_{p}\left(  -\xi\cdot x\right)  e^{(\widehat{J}(||\xi
||_{p})-1)t}\widehat{p_{s}}(\xi,E)d^{n}\xi
\end{align*}
in \ $\mathcal{D}^{\prime}(\mathbb{Q}_{p}^{n})$. Let $B_{N}^{n}$ be a ball
containing the support of $\widehat{p_{s}}(\xi,E)$, with $N$ depending on $E $
and $s$, by using Fubini's Theorem,%
\begin{gather}
p_{t+s}(x,E)=\int_{\mathbf{%
\mathbb{Q}
}_{p}^{n}}\chi_{p}\left(  -\xi\cdot x\right)  \left(  e^{(\widehat{J}%
(||\xi||_{p})-1)t}1_{B_{N}^{n}}(\xi)\right)  \widehat{p_{s}}(\xi,E)d^{n}%
\xi\nonumber\\
=\int_{\mathbf{%
\mathbb{Q}
}_{p}^{n}}\chi_{p}\left(  -\xi\cdot x\right)  \left(  e^{(\widehat{J}%
(||\xi||_{p})-1)t}1_{B_{N}^{n}}(\xi)\right)  \left\{  \int_{\mathbf{%
\mathbb{Q}
}_{p}^{n}}\chi_{p}\left(  y\cdot\xi\right)  p_{s}(y,E)d^{n}y\right\}  d^{n}%
\xi\nonumber\\
=\int_{\mathbf{%
\mathbb{Q}
}_{p}^{n}}\left(  \int_{\mathbf{%
\mathbb{Q}
}_{p}^{n}}\chi_{p}\left(  -(x-y)\cdot\xi\right)  e^{(J(||\xi||_{p}%
)-1)t}1_{B_{N}^{n}}(\xi)d^{n}\xi\right)  p_{s}(y,E)d^{n}y\nonumber\\
=\int_{\mathbf{%
\mathbb{Q}
}_{p}^{n}}p_{t}(x-y,\widehat{1}_{B_{N}^{n}})p_{s}(y,E)d^{n}y\text{ for }%
E\in\mathcal{B}_{0}.\label{Ch_K_3}%
\end{gather}
Formula (\ref{Ch_K_3}) between positive distributions extends to a formula
between Borel measures on $\mathbf{%
\mathbb{Q}
}_{p}^{n}$, by the Caratheodory Theorem.
\end{proof}

\begin{remark}
\label{remark6} (i) The transition function $p_{t}(x,\cdot)$ is normal, $i.e.$
$\lim_{t\rightarrow0^{+}}p_{t}(x,%
\mathbb{Q}
_{p}^{n})=1$ for all $x\in%
\mathbb{Q}
_{p}^{n}$. This follows from the fact that $p_{t}(x,%
\mathbb{Q}
_{p}^{n})=1$, see proof of Claim 1.

(ii) From (\ref{Ch_K_1}) we have $\{p_{t}(x,\cdot)\}_{t\geq0}$ is an
convolution semigroup in $\mathcal{D}^{\prime}(%
\mathbb{Q}
_{p}^{n}),$ and moreover $p_{t}(x,\cdot)\rightarrow\delta$ when $t\rightarrow
0^{+}.$

(iii) A function $p(x,E),$ $x\in%
\mathbb{Q}
_{p}^{n},$ $E\in\mathcal{B}(%
\mathbb{Q}
_{p}^{n}),$ is called a sub-Markovian transition function on $(%
\mathbb{Q}
_{p}^{n},\mathcal{B}(%
\mathbb{Q}
_{p}^{n})),$ if it satisfies $(1)$ for every $x\in%
\mathbb{Q}
_{p}^{n},$ $p(x,\cdot)$ is a measure on $\mathcal{B}(%
\mathbb{Q}
_{p}^{n})$ such that $p(x,%
\mathbb{Q}
_{p}^{n})\leq1;$ $(2)$ for every $E\in\mathcal{B}(%
\mathbb{Q}
_{p}^{n}),$ $p(\cdot,E)$ is a Borel measurable function. Therefore,
$p_{t}(x,E)$ for $t\geq0,$ $x\in%
\mathbb{Q}
_{p}^{n}$, $E\in\mathcal{B}(%
\mathbb{Q}
_{p}^{n})$ is a sub-Markov transition function on $%
\mathbb{Q}
_{p}^{n}.$
\end{remark}

Let $C_{b}(%
\mathbb{Q}
_{p}^{n})$ be the space of real-valued, bounded, and continuous functions on $%
\mathbb{Q}
_{p}^{n}$. This is a Banach space with the norm $\left\Vert f\right\Vert
_{\infty}=\sup_{x\in%
\mathbb{Q}
_{p}^{n}}|f(x)|$. We say that a function $f\in C_{b}(%
\mathbb{Q}
_{p}^{n})$ converges to zero as $x\rightarrow\infty$ if, for each $\epsilon
>0$, there exists a compact subset $E\subset%
\mathbb{Q}
_{p}^{n}$ such that $\left\vert f\left(  x\right)  \right\vert <\epsilon$ for
all $x\in%
\mathbb{Q}
_{p}^{n}\smallsetminus E$. In such case we write $\lim_{x\rightarrow\infty
}f(x)=0$. We set%
\[
C_{0}(%
\mathbb{Q}
_{p}^{n}):=\{f\in C_{b}(%
\mathbb{Q}
_{p}^{n});\text{ }\lim_{x\rightarrow\infty}f(x)=0\}.
\]
The space $C_{0}(%
\mathbb{Q}
_{p}^{n})$ is a closed subspace of $C_{b}(%
\mathbb{Q}
_{p}^{n})$, and thus it is a Banach space.

\begin{definition}
Given a Markov transition function $p_{t}(x,\cdot)$, we attach to it the
following operator:%
\[
T_{t}f(x):=\left\{
\begin{array}
[c]{lll}%
\int_{%
\mathbb{Q}
_{p}^{n}}p_{t}(x,d^{n}y)f(y) & \text{if} & t>0\\
&  & \\
f & \text{if} & t=0.
\end{array}
\right.
\]
We say that $p_{t}(x,\cdot)$ is a $C_{0}$-function if the space $C_{0}(%
\mathbb{Q}
_{p}^{n})$ is an invariant subspace for the operators $T_{t},$ $t\geq0$, i.e.%
\[
f\in C_{0}(%
\mathbb{Q}
_{p}^{n})\longrightarrow T_{t}f\in C_{0}(%
\mathbb{Q}
_{p}^{n}).
\]

\end{definition}

\begin{lemma}
\label{lemma7} $p_{t}(x,\cdot)$ is a $C_{0}$-function. Furthermore,
$T_{t}:C_{0}(%
\mathbb{Q}
_{p}^{n})\rightarrow C_{0}(%
\mathbb{Q}
_{p}^{n})$ is a bounded linear operator.
\end{lemma}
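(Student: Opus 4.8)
The plan is to decompose the statement into three parts: linearity and boundedness of $T_t$, continuity of $T_t f$, and decay of $T_t f$ at infinity. Linearity is immediate from the definition, and the case $t=0$ is the identity operator, so I focus on $t>0$. For boundedness I would invoke Proposition \ref{prop2}, which tells us $p_t(x,\cdot)$ is a probability measure, $p_t(x,\mathbb{Q}_p^n)=1$. Hence
\[
|T_t f(x)| \leq \int_{\mathbb{Q}_p^n} |f(y)|\, p_t(x,d^n y) \leq \|f\|_\infty\, p_t(x,\mathbb{Q}_p^n) = \|f\|_\infty,
\]
so $\|T_t f\|_\infty \leq \|f\|_\infty$ and $T_t$ is a contraction once we know it maps into $C_0(\mathbb{Q}_p^n)$.

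Next I would pass to the convolution form of the semigroup. Since $p_t(x,E)=Z_t(x)\ast 1_E(x)$ and, for $t>0$, $Z_t$ is identified with a probability measure $\mu_t$ (via Theorem \ref{Theorem1}, as already used in Lemma \ref{lemma2}), the transition function is translation invariant and
\[
T_t f(x) = \int_{\mathbb{Q}_p^n} f(x-w)\, \mu_t(d^n w).
\]
Continuity of $T_t f$ for arbitrary $f\in C_b(\mathbb{Q}_p^n)$ then follows from dominated convergence: as $x'\to x$ one has $f(x'-w)\to f(x-w)$ for every $w$, dominated by $\|f\|_\infty\in L^1(\mu_t)$. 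Thus $T_t f\in C_b(\mathbb{Q}_p^n)$.

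For the decay at infinity I would first treat test functions. Let $f\in\mathcal{D}(\mathbb{Q}_p^n)$ with $\mathrm{supp}\,f\subset B_N^n$, and take $\|x\|_p>p^N$. Then $w\mapsto f(x-w)$ is supported on the ball $B_N^n(x)$, which avoids the origin, and on such a ball $\mu_t$ coincides with the locally integrable radial function $\widetilde{Z}(\cdot,t)$ of (\ref{Zeta_mono}). By the ultrametric inequality $\|w\|_p=\|x\|_p$ for every $w\in B_N^n(x)$, so radiality gives $\widetilde{Z}(w,t)=\widetilde{Z}(x,t)$ on the domain of integration, and after the substitution $u=x-w$,
\[
T_t f(x) = \widetilde{Z}(x,t)\int_{\mathbb{Q}_p^n} f(u)\, d^n u.
\]
Proposition \ref{prop3}(i) yields $\widetilde{Z}(x,t)\leq 2t\|x\|_p^{-n}$, whence $T_t f(x)\to 0$ as $x\to\infty$, i.e. $T_t f\in C_0(\mathbb{Q}_p^n)$.

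Finally I would remove the restriction to test functions by density. Since $\mathcal{D}(\mathbb{Q}_p^n)$ is dense in $C_0(\mathbb{Q}_p^n)$ for the sup norm and $T_t$ is a contraction, for $f\in C_0$ pick $g_k\in\mathcal{D}$ with $\|f-g_k\|_\infty\to 0$; then $\|T_t f-T_t g_k\|_\infty\leq\|f-g_k\|_\infty\to 0$ with $T_t g_k\in C_0$, and because $C_0(\mathbb{Q}_p^n)$ is closed in $C_b(\mathbb{Q}_p^n)$ we conclude $T_t f\in C_0$. The main obstacle is the third step: one must correctly identify $\mu_t$ with the explicit radial function $\widetilde{Z}(\cdot,t)$ on balls avoiding the origin and exploit the ultrametric identity $\|x-w\|_p=\|x\|_p$ to collapse the convolution to $\widetilde{Z}(x,t)\int f$, so that the decay estimate of Proposition \ref{prop3} can be applied; the remaining parts are routine.
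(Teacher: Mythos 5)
Your proposal is correct and follows essentially the same route as the paper: a contraction bound from $p_t(x,\mathbb{Q}_p^n)=1$, decay at infinity for test functions via the identification of $Z_t$ with $\widetilde{Z}(\cdot,t)$ away from the origin together with Proposition \ref{prop3}-(i), continuity by dominated convergence, and then density of $\mathcal{D}(\mathbb{Q}_p^{n})$ in $C_{0}(\mathbb{Q}_p^{n})$. Your only departures are cosmetic refinements — using the ultrametric identity $\Vert x-w\Vert_{p}=\Vert x\Vert_{p}$ to collapse the convolution to the exact value $\widetilde{Z}(x,t)\int f$ rather than the paper's inequality $|T_{t}f(x)|\leq Ct\Vert f\Vert_{\infty}\Vert x\Vert_{p}^{-n}\mathrm{vol}(E)$, and proving continuity directly for the measure $\mu_{t}$ (which sidesteps the atom of $Z_t$ at the origin that the paper's density formula glosses over).
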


\begin{proof}
The result follows from the fact that $\mathcal{D}(\mathbb{Q}_{p}^{n})$ is
dense in $C_{0}(%
\mathbb{Q}
_{p}^{n})$, see e.g. \cite[Proposition 1.3]{Taibleson}, by the following Claim:

\textbf{Claim.} $T_{t}:\left(  \mathcal{D}(\mathbb{Q}_{p}^{n}),\left\Vert
\cdot\right\Vert _{\infty}\right)  \rightarrow C_{0}(%
\mathbb{Q}
_{p}^{n})$ is a bounded operator.

The proof of the Claim is as follows. Take $f\in\mathcal{D}(\mathbb{Q}_{p}%
^{n})$ and $t>0$, then%
\begin{align*}
|T_{t}f(x)|  & =|\int_{%
\mathbb{Q}
_{p}^{n}}p_{t}(x,d^{n}y)f(y)|\leq\int_{%
\mathbb{Q}
_{p}^{n}}p_{t}(x,d^{n}y)\left\vert f(y)\right\vert \\
& =\int_{%
\mathbb{Q}
_{p}^{n}\smallsetminus\left\{  0\right\}  }\widetilde{Z}_{t}(y)\left\vert
f(x-y)\right\vert d^{n}y\leq\left\Vert f\right\Vert _{\infty}\int_{%
\mathbb{Q}
_{p}^{n}\smallsetminus\left\{  0\right\}  }\widetilde{Z}_{t}(y)d^{n}y\\
& =\left\Vert f\right\Vert _{\infty}p_{t}(0,%
\mathbb{Q}
_{p}^{n})=\left\Vert f\right\Vert _{\infty},
\end{align*}
cf. \textbf{Claim 1} in the proof of Proposition \ref{prop2}, this shows that
$T_{t}$ is a linear bounded operator from $\left(  \mathcal{D}(\mathbb{Q}%
_{p}^{n}),\left\Vert \cdot\right\Vert _{\infty}\right)  $ into $L^{\infty}(%
\mathbb{Q}
_{p}^{n})$. We now show that $\lim_{x\rightarrow\infty}T_{t}f(x)=0$. Take
$f\in\mathcal{D}(\mathbb{Q}_{p}^{n})$, with supp$f=E$ and $t>0$, then%
\begin{align*}
|T_{t}f(x)| &  =|\int_{%
\mathbb{Q}
_{p}^{n}}p_{t}(x,d^{n}y)f(y)|\leq\left\Vert f\right\Vert _{\infty}\int
_{E}\widetilde{Z}_{t}(x-y)d^{n}y\\
&  \leq Ct\left\Vert f\right\Vert _{\infty}\int_{E}||x-y||_{p}^{-n}%
d^{n}y=Ct\left\Vert f\right\Vert _{\infty}||x||_{p}^{-n}vol(E),
\end{align*}
for $||x||_{p}$ big enough, cf. Proposition \ref{prop3}-(i). Finally, we show
that $\lim_{x\rightarrow x_{0}}T_{t}f\left(  x\right)  =T_{t}f\left(
x_{0}\right)  $ for $t>0$. This fact follows by using the Dominated
Convergence Theorem, since%
\[
T_{t}f\left(  x\right)  =\int_{%
\mathbb{Q}
_{p}^{n}\smallsetminus\left\{  0\right\}  }\widetilde{Z}_{t}(y)f\left(
x-y\right)  d^{n}y
\]
and $\left\vert \widetilde{Z}_{t}(y)f\left(  x-y\right)  \right\vert
\leq\left\Vert f\right\Vert _{\infty}\widetilde{Z}_{t}(y)$ with $\int_{%
\mathbb{Q}
_{p}^{n}\smallsetminus\left\{  0\right\}  }\widetilde{Z}_{t}(y)d^{n}y=p_{t}(0,%
\mathbb{Q}
_{p}^{n})=1$, cf. \textbf{Claim 1} in the proof of Proposition \ref{prop2}.
\end{proof}

\begin{remark}
\label{remark7} $(i)$ We recall some results on Hunt, L\'{e}vy and Markov
processes that we need to establish the main theorem of this section. All our
processes have state space $(%
\mathbb{Q}
_{p}^{n},\mathcal{B}\left(
\mathbb{Q}
_{p}^{n}\right)  ).$ A Hunt process is a Markov standard process which is
quasi-left continuous on $[0,\infty),$ see \cite[Definition 9.2 and the
accompanying remarks]{Blumenthal-Getoor}.

$(ii)$ Let $X=\{X_{t}\}_{t\geq0}$ be a Hunt process with state space $%
\mathbb{Q}
_{p}^{n}$ and adjoined terminal state $\partial$ is a L\'{e}vy process on $%
\mathbb{Q}
_{p}^{n}$ if $(1)$ $P^{x}(X_{t}\in E)=P^{0}(X_{t}+x\in E)$ for $t\geq0,$ $0, $
$x\in%
\mathbb{Q}
_{p}^{n}$ and $E$ a Borel subset of $%
\mathbb{Q}
_{p}^{n};$ and $(2)$ $P^{0}(X_{t}\in%
\mathbb{Q}
_{p}^{n})=1$ for $t\geq0.$ Here $p_{t}(x,E)=P^{x}(X_{t}\in E).$

$(iii)$ The family of Borel probability measures $\{\mu_{t},$ $t\geq0\}$ given
by
\begin{equation}
\mu_{t}(E)=P^{0}(X_{t}\in E)\label{obs mu_t}%
\end{equation}
is a convolution semigroup such that
\begin{equation}
\mu_{t}\rightarrow\delta\text{ as }t\rightarrow0^{+},\label{mu_t tends}%
\end{equation}
where $\delta$ denotes the Dirac distribution. Conversely, it can be shown
that for any convolution semigroup $\{\mu_{t},$ $t\geq0\}$ satisfying
(\ref{mu_t tends}), it is possible to construct a L\'{e}vy process $X_{t}$
with state space $%
\mathbb{Q}
_{p}^{n}$ such that (\ref{obs mu_t}) is satisfied, see \cite[Section 2]{Evans}
and \cite[Exercise I-9-14]{Blumenthal-Getoor}.
\end{remark}

\begin{theorem}
\label{Theorem2} There exists a L\'{e}vy process $\mathfrak{X}\left(
t,\omega\right)  $, with state space $(%
\mathbb{Q}
_{p}^{n},\mathcal{B}\left(
\mathbb{Q}
_{p}^{n}\right)  )$ and transition function $p_{t}(x,\cdot)$.
\end{theorem}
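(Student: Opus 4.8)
The plan is to reduce the statement to the converse direction recorded in Remark \ref{Nota4}(iii): once we exhibit a convolution semigroup of Borel probability measures $\{\mu_t\}_{t\geq 0}$ on $\mathbb{Q}_p^n$ with $\mu_t\to\delta$ as $t\to 0^+$, that remark (via \cite{Evans} and \cite{Blumenthal-Getoor}) furnishes a L\'evy process whose one-dimensional distributions are the $\mu_t$, and it only remains to identify its transition function with $p_t(x,\cdot)$. All the analytic work needed for this has already been assembled in Propositions \ref{prop2}--\ref{prop3} and Lemma \ref{lemma7}.

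First I would set $\mu_t(E):=p_t(0,E)$ for $E\in\mathcal{B}(\mathbb{Q}_p^n)$. By Proposition \ref{prop2} (and Claim 1 in its proof) each $\mu_t$ is a Borel measure with $\mu_t(\mathbb{Q}_p^n)=1$, so $\{\mu_t\}_{t\geq 0}$ is a family of probability measures. Specializing the identity (\ref{Ch_K_1}) to $x=0$ gives $\mu_{t+s}=\mu_t\ast\mu_s$, so that $\{\mu_t\}_{t\geq 0}$ is a convolution semigroup; the convergence $\mu_t\to\delta$ as $t\to 0^+$ is exactly Remark \ref{lemma4}(ii). Hence the hypotheses of the converse assertion in Remark \ref{Nota4}(iii) are met, and we obtain a Hunt process $\mathfrak{X}(t,\omega)$ on $(\mathbb{Q}_p^n,\mathcal{B}(\mathbb{Q}_p^n))$ satisfying $P^0(\mathfrak{X}_t\in E)=\mu_t(E)$ together with the spatial-homogeneity property (1) of Remark \ref{Nota4}(ii); that is, $\mathfrak{X}(t,\omega)$ is a L\'evy process.

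It remains to check that the transition function of $\mathfrak{X}$ coincides with $p_t(x,\cdot)$. Using the L\'evy (translation-invariance) property and the definition of $\mu_t$, for any $x$ and any Borel set $E$ we have $P^x(\mathfrak{X}_t\in E)=P^0(\mathfrak{X}_t\in E-x)=\mu_t(E-x)=p_t(0,E-x)$, and since $p_t(x,E)=(\widetilde{Z}_t\ast 1_E)(x)$ is a convolution kernel it is translation invariant, so $p_t(0,E-x)=p_t(x,E)$. Thus $P^x(\mathfrak{X}_t\in E)=p_t(x,E)$ for all $t\geq 0$, $x\in\mathbb{Q}_p^n$, and $E\in\mathcal{B}(\mathbb{Q}_p^n)$, as required.

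The step I expect to be the main obstacle is not the algebra of the convolution semigroup but the two regularity points hidden inside the appeal to Remark \ref{Nota4}(iii): guaranteeing that the abstract process can be taken to be a genuine Hunt process (c\`adl\`ag paths, strong Markov property, quasi-left-continuity) rather than merely a Markov family, and checking that the transition function we built generates the Feller ($C_0$) semigroup required for that construction. Both are supplied by the estimates of Proposition \ref{prop3} and by Lemma \ref{lemma7}, which show that the operators $T_t$ preserve $C_0(\mathbb{Q}_p^n)$; I would make this dependence explicit so that the path regularity of $\mathfrak{X}$ is justified and the identification of $p_t(x,\cdot)$ as its transition function is legitimate.
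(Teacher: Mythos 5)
Your argument is correct, and it reaches the conclusion by a route that is recognizably different from the paper's, even though it draws on the same preparatory results. The paper's proof runs in the order: first obtain a Hunt process with transition function $p_{t}(x,\cdot )$ directly from \cite[Theorem 9.4]{Blumenthal-Getoor}, using Proposition \ref{prop2} (Markov transition function), Lemma \ref{lemma7} (the $C_{0}$-property of $T_{t}$) and the normality/sub-Markov observations of Remark \ref{lemma4}; only afterwards does it invoke the convolution-semigroup structure of Remark \ref{lemma4}-(ii) together with Remark \ref{Nota4} to conclude that this Hunt process is a L\'{e}vy process. You instead take the convolution semigroup $\mu _{t}=p_{t}(0,\cdot )$ as the primary object and feed it into the converse construction recorded in Remark \ref{Nota4}-(iii) (Evans \cite[Section 2]{Evans}, \cite[Exercise I-9-14]{Blumenthal-Getoor}), obtaining the L\'{e}vy property by construction and then recovering the transition function from translation invariance of the convolution kernel. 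The trade-off is exactly the one you identify: your route gets spatial homogeneity for free but must separately justify that the abstract process is a genuine Hunt process and that its transition function is $p_{t}(x,\cdot )$, whereas the paper gets the transition function and the Hunt regularity in one stroke from Theorem 9.4 and must then check the L\'{e}vy conditions. Your closing paragraph correctly locates where Lemma \ref{lemma7} and Proposition \ref{prop3} must still enter your version (the Feller property underlying the path regularity), so no ingredient of the paper's argument is actually bypassed; the two proofs are logically equivalent reorderings resting on Proposition \ref{prop2}, Lemma \ref{lemma7}, and Remarks \ref{lemma4} and \ref{Nota4}.
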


\begin{proof}
We first show that there exists a Hunt process $\mathfrak{X}\left(
t,\omega\right)  $ with state space $(%
\mathbb{Q}
_{p}^{n},\mathcal{B}\left(
\mathbb{Q}
_{p}^{n}\right)  )$ and transition function $p_{t}(x,\cdot).$ This result
follow from \cite[Theorem 9.4]{Blumenthal-Getoor} by Proposition \ref{prop2},
Lemma \ref{lemma7} and Remark \ref{remark6}-(i), (iii). On the other hand,
from Remarks \ref{remark7} and \ref{remark6}-(ii), it follows that the Hunt
process constructed is a L\'{e}vy process.
\end{proof}

\section{First Passage Time Problem}

Consider the following Cauchy problem:%
\begin{equation}
\left\{
\begin{array}
[c]{ll}%
\frac{\partial u}{\partial t}(x,t)=Au(x,t), & t\in\left[  0,\infty\right)
,\text{\ }x\in%
\mathbb{Q}
_{p}^{n}\\
& \\
u(x,0)=\Omega(||x||_{p}). &
\end{array}
\right. \label{Cauchy_problem_2}%
\end{equation}
By Proposition \ref{lemma_prop1},
\begin{equation}
u(x,t)=Z_{t}(x)\ast\Omega(||x||_{p}),\label{u(x,t)}%
\end{equation}
is a classical solution of (\ref{Cauchy_problem_2}). We now define%
\[
q_{t}\left(  x,E\right)  =\left\{
\begin{array}
[c]{ll}%
\left(  u(\cdot,t)\ast1_{E}\right)  \left(  x\right)  & \text{for }t>0\text{
and }E\in\mathcal{B}\left(
\mathbb{Q}
_{p}^{n}\right) \\
& \\
1_{E}\left(  x\right)  & \text{for }t=0\text{ and }E\in\mathcal{B}\left(
\mathbb{Q}
_{p}^{n}\right)  .
\end{array}
\right.
\]
Since
\[%
\begin{array}
[c]{ccc}%
\mathcal{D}\left(
\mathbb{Q}
_{p}^{n}\right)  & \rightarrow & \mathcal{%
\mathbb{C}
}\\
&  & \\
\phi\left(  x\right)  & \rightarrow & \phi\left(  x\right)  \ast
\Omega(||x||_{p})
\end{array}
\]
is linear continuous mapping, by the arguments given in the proof of
Proposition \ref{prop2}, $q_{t}\left(  x,E\right)  $ is the transition
function of a Markov process $\mathfrak{J}(t,\omega)$. Set $\Upsilon$ to be
the space of all paths $\mathfrak{J}(t,\omega).$ Then there exists a
probability space $\left(  \Upsilon,\mathcal{F},P\right)  ,$ where $P$ is a
probability measure on $\Upsilon$ and $\mathfrak{J}(t,\cdot):\left(
\Upsilon,\mathcal{F},P\right)  \rightarrow\left(
\mathbb{Q}
_{p}^{n},\mathcal{B}\left(
\mathbb{Q}
_{p}^{n}\right)  ,d^{n}x\right)  $ is random variable for each $t\geq0$. The
construction of this probability space follows from classical arguments, see,
e.g., \cite{Edward N}. We notice that%
\begin{align*}
P(\{\omega & \in\Upsilon:\text{ }\mathfrak{J}(0,\omega)\in%
\mathbb{Z}
_{p}^{n}\})=q_{0}\left(  0,\mathbb{Z}_{p}^{n}\right)  =\Omega(||x||_{p}%
)\ast\Omega\left(  ||x||_{p}\right)  \mid_{x=0}\\
& =\Omega\left(  ||x||_{p}\right)  \mid_{x=0}=1\text{.}%
\end{align*}

In this section we study the following random variable.

\begin{definition}
The random variable $\tau_{%
\mathbb{Z}
_{p}^{n}}(\omega):$ $\Upsilon\longrightarrow%
\mathbb{R}
_{+}\cup\{+\infty\}$ defined by%
\[
\inf\{t>0;\mathfrak{J(}t,\omega\mathfrak{)}\in%
\mathbb{Z}
_{p}^{n}\text{ and there exists }t^{\prime}\text{ such that }0<t^{\prime
}<t\text{ and }\mathfrak{J(}t^{\prime},\omega\mathfrak{)\notin}%
\mathbb{Z}
_{p}^{n}\}
\]
is called the first passage time of a path of the random process
$\mathfrak{J(}t,\omega\mathfrak{)}$ entering the domain $%
\mathbb{Z}
_{p}^{n}$.
\end{definition}

\begin{remark}
\label{remark8} We notice that the condition
\begin{equation}
P(\{\omega\in\Upsilon:\tau_{%
\mathbb{Z}
_{p}^{n}}(\omega)\mathfrak{<}\infty\})=1\label{Probability}%
\end{equation}
means that every path of $\mathfrak{J(}t,\omega\mathfrak{)}$ is sure to return
to $%
\mathbb{Z}
_{p}^{n}$. If (\ref{Probability}) does not hold, then there exist paths of
$\mathfrak{J(}t,\omega\mathfrak{)}$ that abandon $%
\mathbb{Z}
_{p}^{n}$ and never go back.
\end{remark}

\begin{lemma}
\label{lemma8}The function $u(x,t)=Z_{t}(x)\ast\Omega(||x||_{p})$, $t\geq0 $,
is pointwise differentiable in $t$ and its derivative is given by the formula%
\[
\frac{\partial u}{\partial t}(x,t)=%
{\displaystyle\int\limits_{\mathbb{Z}_{p}^{n}}}
\chi_{p}(-x.\xi)e^{(\widehat{J}(\left\Vert \xi\right\Vert _{p})-1)t}\left[
\widehat{J}(\left\Vert \xi\right\Vert _{p})-1\right]  d^{n}\xi\text{, for
}t\geq0.
\]

\end{lemma}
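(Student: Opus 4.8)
The plan is to reduce this lemma to the differentiation-under-the-integral argument already carried out in Claim 1 of the proof of Proposition \ref{lemma_prop1}, once the convolution $Z_{t}(x)\ast \Omega (\left\Vert x\right\Vert _{p})$ has been rewritten as an explicit integral over $\mathbb{Z}_{p}^{n}$. The point that does the work here is that the initial datum $\Omega (\left\Vert x\right\Vert _{p})=1_{\mathbb{Z}_{p}^{n}}$ is fixed by the Fourier transform, so that its transform collapses the integration domain to the unit ball.

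First I would compute the Fourier transform of the initial datum. Since $\mathbb{Z}_{p}^{n}=B_{0}^{n}$ is a compact open subgroup with $vol(\mathbb{Z}_{p}^{n})=1$, a direct calculation gives $\mathcal{F}(\Omega (\left\Vert \cdot \right\Vert _{p}))(\xi )=\int_{\mathbb{Z}_{p}^{n}}\chi _{p}(\xi \cdot x)d^{n}x=\Omega (\left\Vert \xi \right\Vert _{p})$. Because $\Omega (\left\Vert \cdot \right\Vert _{p})\in \mathcal{D}(\mathbb{Q}_{p}^{n})$ and $e^{(\widehat{J}(\left\Vert \xi \right\Vert _{p})-1)t}\in L_{\text{loc}}^{1}$, the convolution theorem for distributions then yields
\begin{equation*}
u(x,t)=Z_{t}(x)\ast \Omega (\left\Vert x\right\Vert _{p})=\mathcal{F}_{\xi \rightarrow x}^{-1}\left( e^{(\widehat{J}(\left\Vert \xi \right\Vert _{p})-1)t}\Omega (\left\Vert \xi \right\Vert _{p})\right) =\int_{\mathbb{Z}_{p}^{n}}\chi _{p}(-x\cdot \xi )e^{(\widehat{J}(\left\Vert \xi \right\Vert _{p})-1)t}d^{n}\xi ,
\end{equation*}
where the factor $\Omega (\left\Vert \xi \right\Vert _{p})$ has restricted the integral to $\mathbb{Z}_{p}^{n}$. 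This is precisely the formula of Proposition \ref{lemma_prop1} specialized to $u_{0}=\Omega (\left\Vert \cdot \right\Vert _{p})$.

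Next I would differentiate under the integral sign. Formally differentiating the integrand in $t$ produces the claimed expression, and to justify passing the derivative past the integral I would repeat the Dominated Convergence argument of Claim 1 in the proof of Proposition \ref{lemma_prop1}. On $\mathbb{Z}_{p}^{n}$ one has $\left\vert \chi _{p}(-x\cdot \xi )e^{(\widehat{J}(\left\Vert \xi \right\Vert _{p})-1)t}\right\vert \leq 1$ and $\left\vert \chi _{p}(-x\cdot \xi )(\widehat{J}(\left\Vert \xi \right\Vert _{p})-1)e^{(\widehat{J}(\left\Vert \xi \right\Vert _{p})-1)t}\right\vert \leq 2$, using $|\widehat{J}|\leq 1$ from Lemma \ref{lemma1}-(i), which forces $\widehat{J}-1\leq 0$ and hence bounds the exponential by $1$. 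Since $\mathbb{Z}_{p}^{n}$ has finite Haar measure, these constants are integrable over the domain and the bounds are uniform in $t$, which licenses differentiation for all $t\geq 0$.

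I do not expect a serious obstacle: the self-duality $\mathcal{F}(\Omega )=\Omega$ confines the integral to the compact set $\mathbb{Z}_{p}^{n}$ of finite measure, so the dominating functions are mere constants rather than the $L^{1}$ majorant $|\widehat{u_{0}}|$ used in the general statement of Proposition \ref{lemma_prop1}. The only mild point of care is that at $t=0$ the derivative is one-sided; but since the above bounds hold uniformly over $t\in \lbrack 0,\infty )$, the same Dominated Convergence argument gives the right derivative there as well, completing the proof.
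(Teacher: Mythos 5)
Your proposal is correct and follows essentially the same route as the paper, which simply invokes the Dominated Convergence Theorem; you supply the details (the self-duality $\mathcal{F}(\Omega)=\Omega$ reducing $u(x,t)$ to an integral over $\mathbb{Z}_{p}^{n}$, and the constant majorants $1$ and $2$ coming from Lemma \ref{lemma1}-(i)) exactly as the argument of Claim 1 in Proposition \ref{lemma_prop1} intends. No gaps.
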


\begin{proof}
The formula is obtained by applying the Dominated Convergence Theorem.
\end{proof}

\begin{lemma}
\label{lemma9}The probability density function for a path of $\mathfrak{J}%
(t,\omega)$ to enter into $%
\mathbb{Z}
_{p}^{n}$ at the instant of time $t$, with the condition that $\mathfrak{J}%
(0,\omega)\in$ $%
\mathbb{Z}
_{p}^{n}$ is given by%
\begin{equation}
g(t)=%
{\displaystyle\int\limits_{\mathbf{\mathbb{Q} }_{p}^{n}\backslash\mathbb{Z}
_{p}^{n}}}
J(||y||_{p})u(y,t)d^{n}y.\label{g(t)}%
\end{equation}

\end{lemma}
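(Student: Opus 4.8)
The plan is to read $g(t)$ as the intensity with which probability mass crosses into the ball $\mathbb{Z}_p^n$ coming from its exterior, and then to collapse the resulting double integral by means of the ultrametric inequality. Recall that $\mathfrak{J}(t,\omega)$ is driven by the master equation (\ref{Cauchy})-(A); since $\int_{\mathbb{Q}_p^n}J(\|x\|_p)\,d^nx=1$, a change of variables lets us rewrite this equation in gain--loss form as
\begin{equation*}
\frac{\partial u}{\partial t}(x,t)=\int_{\mathbb{Q}_p^n}J(\|x-y\|_p)\left\{u(y,t)-u(x,t)\right\}d^ny ,
\end{equation*}
so that $J(\|x-y\|_p)\,d^ny$ is the infinitesimal rate of a jump from $x$ into $d^ny$. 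Consequently the rate at which a path located at an exterior point $y$ jumps into $\mathbb{Z}_p^n$ equals $\int_{\mathbb{Z}_p^n}J(\|x-y\|_p)\,d^nx$; weighting this by the density $u(y,t)$ of being at $y$ at time $t$ and integrating over the exterior, the sought entry density is
\begin{equation*}
g(t)=\int_{\mathbb{Z}_p^n}\int_{\mathbb{Q}_p^n\backslash\mathbb{Z}_p^n}J(\|x-y\|_p)\,u(y,t)\,d^ny\,d^nx .
\end{equation*}

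Next I would justify the interchange of the two integrations. Since $u(x,t)=\left(Z_t\ast\Omega\right)(x)=p_t(x,\mathbb{Z}_p^n)$ and $p_t(x,\cdot)$ is a probability measure by Proposition \ref{prop2}, one has $0\le u(x,t)\le 1$ for all $t\ge0$ and $x\in\mathbb{Q}_p^n$. As the integrand above is nonnegative and $J\in L^1$, Tonelli's theorem applies and gives
\begin{equation*}
g(t)=\int_{\mathbb{Q}_p^n\backslash\mathbb{Z}_p^n}u(y,t)\left\{\int_{\mathbb{Z}_p^n}J(\|x-y\|_p)\,d^nx\right\}d^ny .
\end{equation*}

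The heart of the argument is the evaluation of the inner integral through the ultrametric structure of $\mathbb{Q}_p^n$. Fix $y\in\mathbb{Q}_p^n\backslash\mathbb{Z}_p^n$, so that $\|y\|_p\ge p>1$. For every $x\in\mathbb{Z}_p^n$ we have $\|x\|_p\le1<\|y\|_p$, and the strong triangle inequality forces $\|x-y\|_p=\|y\|_p$. Since $J$ is radial, $J(\|x-y\|_p)=J(\|y\|_p)$ is constant in $x$ over $\mathbb{Z}_p^n$, whence
\begin{equation*}
\int_{\mathbb{Z}_p^n}J(\|x-y\|_p)\,d^nx=J(\|y\|_p)\,vol(\mathbb{Z}_p^n)=J(\|y\|_p) .
\end{equation*}
Substituting this into the previous display yields exactly formula (\ref{g(t)}).

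I expect the only genuinely delicate point to be the first step, namely the identification of the probabilistic object ``density of entering $\mathbb{Z}_p^n$ at the instant $t$'' with the analytic inflow integral. This rests on Theorem \ref{Theorem2}, which exhibits $\mathfrak{J}(t,\omega)$ as a process whose jumps are governed precisely by the kernel $J(\|x-y\|_p)$; granting this, a short--time expansion of the transition function produces the inflow term up to an error $o(t)$. The remaining ingredients---the bound $0\le u\le1$, the application of Tonelli, and the ultrametric collapse of the inner integral---are routine, the last being the clean technical core of the computation.
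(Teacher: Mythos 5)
Your argument is correct and reaches formula (\ref{g(t)}), but by a genuinely different route from the paper's. The paper does not posit the inflow integral directly: it introduces the survival probability $S(t)=\int_{\mathbb{Z}_p^n}u(x,t)\,d^nx$, computes $S^{\prime }(t)$ from the master equation via Lemma \ref{lemma8}, splits the $y$-integration into $\mathbb{Z}_p^n$ and its complement, shows the interior contribution vanishes because $u(x-y,t)=u(x,t)$ for $x,y\in \mathbb{Z}_p^n$ (read off from the representation $u(x,t)=\int_{\mathbb{Z}_p^n}\chi _p(-x\cdot \xi )e^{(\widehat{J}(||\xi ||_{p})-1)t}d^{n}\xi $), and then matches the two surviving terms against the balance ansatz $S^{\prime }(t)=g(t)-CS(t)$, identifying $g$ with the gain term and $C=\int_{\mathbb{Q}_p^n\backslash \mathbb{Z}_p^n}J\leq 1$ with the loss rate. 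Your double integral $\int_{\mathbb{Z}_p^n}\int_{\mathbb{Q}_p^n\backslash \mathbb{Z}_p^n}J(||x-y||_p)u(y,t)\,d^ny\,d^nx$ is exactly the paper's gain term after the substitution $y\mapsto x-y$; where the paper collapses it using the $\mathbb{Z}_p^n$-translation invariance of $u$, you collapse it using the ultrametric identity $||x-y||_p=||y||_p$, and both reductions are valid. What the paper's route buys is that the identification of $g$ is anchored to the equation itself: the claimed $g$ is verified to be the positive part of $\frac{d}{dt}\int_{\mathbb{Z}_p^n}u$, which is the consistency check behind the physical balance statement (and is why the introduction insists that Lemmas \ref{lemma8}--\ref{lemma9} require the ultradiffusion equation rather than only the semigroup). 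Your route buys brevity and makes the jump-kernel reading of $J$ explicit, but it carries the same heuristic leap as the paper --- interpreting $J(||x-y||_p)\,d^ny$ as the jump intensity and the inflow integral as the entry density --- without that compensating verification; as you note yourself, this identification is not made rigorous here any more than in the paper. Your auxiliary points ($0\leq u\leq 1$ via $u(x,t)=p_t(x,\mathbb{Z}_p^n)$, Tonelli, $vol(\mathbb{Z}_p^n)=1$) are all fine.
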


\begin{proof}
The survival probability, by definition%
\[
S(t):=S_{%
\mathbb{Z}
_{p}^{n}}(t)=%
{\displaystyle\int\limits_{\mathbb{Z} _{p}^{n}}}
u(x,t)d^{n}x,
\]
is the probability that a path of $\mathfrak{J(}t,\omega\mathfrak{)}$ remains
in $%
\mathbb{Z}
_{p}^{n}$ at the time $t.$ Because there are no external or internal sources,%
\begin{align*}
S^{\prime}(t) &  =\left.
\begin{array}
[c]{l}%
\text{Probability that a path of }\mathfrak{J}(t,\omega)\\
\text{goes back to }%
\mathbb{Z}
_{p}^{n}\text{ at the time }t
\end{array}
\right.  -\left.
\begin{array}
[c]{l}%
\text{Probability that a path of }\mathfrak{J}(t,\omega)\\
\text{exists }%
\mathbb{Z}
_{p}^{n}\text{ at the time }t
\end{array}
\right. \\
& \\
&  =g(t)-CS(t)\text{, with }0<C\leq1.
\end{align*}
By using Lemma \ref{lemma8},
\begin{gather*}
S^{\prime}(t)=%
{\displaystyle\int\limits_{\mathbb{Z} _{p}^{n}}}
\frac{\partial}{\partial t}u(x,t)d^{n}x=%
{\displaystyle\int\limits_{\mathbb{Z} _{p}^{n}}}
\left\{
{\displaystyle\int\limits_{\mathbf{\mathbb{Q} }_{p}^{n}}}
J(||y||_{p})u(x-y,t)d^{n}y-u(x,t)\right\}  d^{n}x\\
=%
{\displaystyle\int\limits_{\mathbb{Z} _{p}^{n}}}
{\displaystyle\int\limits_{\mathbf{\mathbb{Q} }_{p}^{n}}}
J(||y||_{p})\left\{  u(x-y,t)-u(x,t)\right\}  d^{n}yd^{n}x\\
=%
{\displaystyle\int\limits_{\mathbb{Z} _{p}^{n}}}
{\displaystyle\int\limits_{\mathbb{Z} _{p}^{n}}}
J(||y||_{p})\left\{  u(x-y,t)-u(x,t)\right\}  d^{n}yd^{n}x\\
+%
{\displaystyle\int\limits_{\mathbb{Z} _{p}^{n}}}
{\displaystyle\int\limits_{\mathbf{\mathbb{Q} }_{p}^{n}\backslash\mathbb{Z}
_{p}^{n}}}
J(||y||_{p})\left\{  u(x-y,t)-u(x,t)\right\}  d^{n}yd^{n}x.
\end{gather*}
By Proposition \ref{lemma_prop1}, for $x,y\in%
\mathbb{Z}
_{p}^{n}$,
\begin{align*}
u(x-y,t)  & =\int_{\mathbb{Z}_{p}^{n}}\chi_{p}\left(  -(x-y)\cdot\xi\right)
e^{(\widehat{J}(||\xi||_{p})-1)t}d^{n}\xi=\int_{%
\mathbb{Z}
_{p}^{n}}e^{(\widehat{J}(||\xi||_{p})-1)t}d^{n}\xi\\
& =u(x,t)\text{, }%
\end{align*}
i.e. $u(x-y,t)-u(x,t)\equiv0$ for $x,y\in%
\mathbb{Z}
_{p}^{n}$, consequently,%
\begin{align*}
S^{\prime}(t)  & =%
{\displaystyle\int\limits_{\mathbb{Z} _{p}^{n}}}
{\displaystyle\int\limits_{\mathbf{\mathbb{Q} }_{p}^{n}\backslash\mathbb{Z}
_{p}^{n}}}
J(||y||_{p})(u(x-y,t)-u(x,t))d^{n}yd^{n}x\\
& =%
{\displaystyle\int\limits_{\mathbb{Z} _{p}^{n}}}
{\displaystyle\int\limits_{\mathbf{\mathbb{Q} }_{p}^{n}\backslash\mathbb{Z}
_{p}^{n}}}
J(||y||_{p})u(x-y,t)d^{n}yd^{n}x-%
{\displaystyle\int\limits_{\mathbf{\mathbb{Q} }_{p}^{n}\backslash\mathbb{Z}
_{p}^{n}}}
J(||y||_{p})d^{n}y%
{\displaystyle\int\limits_{\mathbb{Z} _{p}^{n}}}
u(x,t)d^{n}x\\
& =%
{\displaystyle\int\limits_{\mathbb{Z} _{p}^{n}}}
{\displaystyle\int\limits_{\mathbf{\mathbb{Q} }_{p}^{n}\backslash\mathbb{Z}
_{p}^{n}}}
J(||y||_{p})u(x-y,t)d^{n}yd^{n}x-CS(t),
\end{align*}
with $C:=\int_{\mathbf{%
\mathbb{Q}
}_{p}^{n}\backslash%
\mathbb{Z}
_{p}^{n}}J(||y||_{p})d^{n}y\leq1$, since $J$ is of exponential type and
$\int_{\mathbf{%
\mathbb{Q}
}_{p}^{n}}J(||y||_{p})d^{n}y=1$. We notice that if $x\in%
\mathbb{Z}
_{p}^{n}$ and $y\in\mathbf{%
\mathbb{Q}
}_{p}^{n}\backslash%
\mathbb{Z}
_{p}^{n}$, then
\begin{gather*}
u(x-y,t)=\int_{%
\mathbb{Z}
_{p}^{n}}\chi_{p}\left(  -x\cdot\xi\right)  \chi_{p}\left(  y\cdot\xi\right)
e^{(\widehat{J}(||\xi||_{p})-1)t}d^{n}\xi\\
=\int_{%
\mathbb{Z}
_{p}^{n}}\chi_{p}\left(  y\cdot\xi\right)  e^{(\widehat{J}(||\xi||_{p}%
)-1)t}d^{n}\xi=\int_{%
\mathbb{Z}
_{p}^{n}}\chi_{p}\left(  -y\cdot\xi\right)  e^{(\widehat{J}(||\xi||_{p}%
)-1)t}d^{n}\xi=u(y,t),
\end{gather*}
and consequently%
\begin{align*}
S^{\prime}(t)  & =%
{\displaystyle\int\limits_{\mathbf{\mathbb{Q} }_{p}^{n}\backslash\mathbb{Z}
_{p}^{n}}}
J(||y||_{p})u(y,t)d^{n}y-CS(t)\\
& \\
& =g(t)-CS(t)\text{\ with }0<C\leq1.
\end{align*}

\end{proof}

\begin{proposition}
\label{prop4}The probability density function $f(t)$ of the random variable
$\tau_{%
\mathbb{Z}
_{p}^{n}}(\omega)$ satisfies the non-homogeneous Volterra equation of second
kind%
\begin{equation}
g(t)=%
{\displaystyle\int\limits_{0}^{\infty}}
g(t-\tau)f(\tau)d\tau+f(t).\label{Volterra_equ}%
\end{equation}

\end{proposition}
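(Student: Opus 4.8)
The plan is to obtain (\ref{Volterra_equ}) from a renewal-type decomposition of the event ``a path of $\mathfrak{J}(t,\omega)$ enters $\mathbb{Z}_{p}^{n}$ at time $t$'', using the strong Markov property and the time-homogeneity of the process constructed in Theorem \ref{Theorem2}. The density $g(t)$ of Lemma \ref{lemma9} records \emph{every} such entry, making no distinction between the first return and later ones, whereas the density $f(t)$ of the first passage time $\tau_{\mathbb{Z}_{p}^{n}}$ records only the first return. Relating the two is precisely what the Volterra equation expresses.

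First I would split the event of entering $\mathbb{Z}_{p}^{n}$ at time $t$ into two mutually exclusive alternatives. Either this entry is the first return, which contributes the density $f(t)$; or the path has already returned to $\mathbb{Z}_{p}^{n}$ for the first time at some earlier instant $\tau$, with $0<\tau<t$, and subsequently enters $\mathbb{Z}_{p}^{n}$ again at time $t$. Conditioning on the first passage time taking the value $\tau$ (density $f(\tau)$), and applying the strong Markov property at $\tau$ --- at which moment the path occupies $\mathbb{Z}_{p}^{n}$, so that the process restarts from a point of $\mathbb{Z}_{p}^{n}$ --- the density of re-entering $\mathbb{Z}_{p}^{n}$ after the remaining lapse $t-\tau$ is again $g(t-\tau)$ by time-homogeneity. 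Integrating over all admissible $\tau$ yields the contribution $\int_{0}^{t}g(t-\tau)f(\tau)\,d\tau$. Summing the two contributions gives the identity
\[
g(t)=f(t)+\int_{0}^{t}g(t-\tau)f(\tau)\,d\tau ,
\]
and because $g$ is defined only for nonnegative times, $g(t-\tau)$ vanishes for $\tau>t$, so the upper limit may be raised to $+\infty$ without altering the integral. This is exactly (\ref{Volterra_equ}).

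The delicate point I expect to be the main obstacle is the rigorous justification of the renewal step. One must verify that $\tau_{\mathbb{Z}_{p}^{n}}$ is a stopping time for the natural filtration of $\mathfrak{J}(t,\omega)$ and that the strong Markov property holds there, so that the process restarts afresh from its (random) position in $\mathbb{Z}_{p}^{n}$ at the instant $\tau_{\mathbb{Z}_{p}^{n}}$. A second point, essential for the restarted entry density to be again $g$ rather than some position-dependent quantity, is that $\mathbb{Z}_{p}^{n}$ is an additive subgroup of $\mathbb{Q}_{p}^{n}$: for any $z\in\mathbb{Z}_{p}^{n}$ one has $\mathbb{Z}_{p}^{n}-z=\mathbb{Z}_{p}^{n}$, so the space-homogeneity of the transition mechanism makes the probability of occupying $\mathbb{Z}_{p}^{n}$, and hence the entry flux computed in (\ref{g(t)}), independent of the starting point inside $\mathbb{Z}_{p}^{n}$.

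Finally one must secure the existence of an honest density $f$ for $\tau_{\mathbb{Z}_{p}^{n}}$. These ingredients rest on the Hunt (hence strong Markov) structure underlying Theorem \ref{Theorem2} together with the space-homogeneity of the transition function of Proposition \ref{prop2}; assembling them into the convolution identity above is where the genuine care is required, the remainder of the argument being the elementary bookkeeping of the two mutually exclusive cases.
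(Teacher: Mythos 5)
Your renewal decomposition --- splitting the event of entering $\mathbb{Z}_{p}^{n}$ at time $t$ into ``first return at $t$'' versus ``first return at $\tau<t$ followed by a re-entry at $t$,'' and invoking the strong Markov property together with the translation invariance of the process to get the convolution term --- is precisely the argument the paper relies on: its proof consists of the single sentence that the result follows from Lemma \ref{lemma9} by the argument of Theorem 1 in \cite{Av-2}, which is exactly this renewal identity. Your write-up is correct and in fact more explicit than the paper's, including the (legitimate) observation that the upper limit $\infty$ in (\ref{Volterra_equ}) is harmless since $g(t-\tau)$ vanishes for $\tau>t$.
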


\begin{proof}
The result follow from Lemma \ref{lemma9} by using the argument given in the
proof of Theorem 1 in \cite{Av-2}.
\end{proof}

\begin{lemma}
\label{prop5}The Laplace transform $G(s)$ of $g(t)$ is given by%
\begin{equation}
G(s)=%
{\displaystyle\int\limits_{\mathbf{\mathbb{Q} }_{p}^{n}\backslash\mathbb{Z}
_{p}^{n}}}
J(||y||_{p})%
{\displaystyle\int\limits_{\mathbb{Z} _{p}^{n}}}
\frac{\chi_{p}\left(  -y\cdot\xi\right)  }{s+(1-\widehat{J}(\left\Vert
\xi\right\Vert _{p}))}d^{n}\xi d^{n}y\text{, for }\operatorname{Re}%
(s)>0.\label{Laplace_T}%
\end{equation}

\end{lemma}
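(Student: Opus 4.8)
The plan is to obtain $G(s)$ by substituting the explicit Fourier representation of $u(y,t)$ into the Laplace transform of $g(t)$, interchanging the order of integration, and then evaluating the resulting elementary integral in $t$. First I would observe that the initial datum $\Omega(\|x\|_p)$ is the characteristic function of $\mathbb{Z}_p^n=B_0^n$, which is its own Fourier transform, so that $\widehat{u_0}(\xi)=\Omega(\|\xi\|_p)$. Hence Proposition \ref{lemma_prop1} (together with (\ref{u(x,t)})) gives the representation
\begin{equation*}
u(y,t)=\int_{\mathbb{Z}_p^n}\chi_p(-y\cdot\xi)\,e^{(\widehat{J}(\|\xi\|_p)-1)t}\,d^n\xi,
\end{equation*}
which is precisely the expression already exploited in the proof of Lemma \ref{lemma9}. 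Inserting this into the formula (\ref{g(t)}) for $g(t)$ and forming the Laplace transform $G(s)=\int_0^\infty e^{-st}g(t)\,dt$ yields the triple integral
\begin{equation*}
G(s)=\int_0^\infty\int_{\mathbb{Q}_p^n\setminus\mathbb{Z}_p^n}\int_{\mathbb{Z}_p^n}e^{-st}\,J(\|y\|_p)\,\chi_p(-y\cdot\xi)\,e^{(\widehat{J}(\|\xi\|_p)-1)t}\,d^n\xi\,d^ny\,dt.
\end{equation*}

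Next I would interchange the order of integration so as to place the $t$-integral innermost and evaluate
\begin{equation*}
\int_0^\infty e^{-\bigl(s+(1-\widehat{J}(\|\xi\|_p))\bigr)t}\,dt=\frac{1}{s+(1-\widehat{J}(\|\xi\|_p))}.
\end{equation*}
This evaluation is legitimate because $1-\widehat{J}(\|\xi\|_p)\geq 0$ by Lemma \ref{lemma1}, so for $\operatorname{Re}(s)>0$ the exponent has strictly positive real part. Reinserting this value reproduces exactly formula (\ref{Laplace_T}).

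The only delicate point is the justification of Fubini's theorem for the three interchanged integrations. The relevant pointwise bound is
\begin{equation*}
\bigl|e^{-st}\,J(\|y\|_p)\,\chi_p(-y\cdot\xi)\,e^{(\widehat{J}(\|\xi\|_p)-1)t}\bigr|\leq e^{-\operatorname{Re}(s)\,t}\,J(\|y\|_p),
\end{equation*}
since $|\chi_p|\equiv 1$ and $\widehat{J}(\|\xi\|_p)-1\leq 0$ by Lemma \ref{lemma1}-(i). Integrating this majorant over $(0,\infty)\times(\mathbb{Q}_p^n\setminus\mathbb{Z}_p^n)\times\mathbb{Z}_p^n$ gives the quantity $\operatorname{Re}(s)^{-1}\bigl(\int_{\mathbb{Q}_p^n\setminus\mathbb{Z}_p^n}J(\|y\|_p)\,d^ny\bigr)\operatorname{vol}(\mathbb{Z}_p^n)$, which is finite because $J$ is integrable (indeed $\int_{\mathbb{Q}_p^n\setminus\mathbb{Z}_p^n}J(\|y\|_p)\,d^ny=C\leq 1$, as noted in Lemma \ref{lemma9}) and $\operatorname{vol}(\mathbb{Z}_p^n)=1$. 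Thus the integrand is absolutely integrable and all the interchanges are valid. I expect this Fubini verification, and the observation that $1-\widehat{J}\geq 0$ keeps the exponent admissible, to be the only real points requiring care, the remaining computations being entirely elementary.
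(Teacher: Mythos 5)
Your proposal is correct and follows essentially the same route as the paper: the authors likewise note that $e^{-st}J(\|y\|_p)e^{(\widehat{J}(\|\xi\|_p)-1)t}\Omega(\|\xi\|_p)$ is integrable on $(0,\infty)\times(\mathbb{Q}_p^n\setminus\mathbb{Z}_p^n)\times\mathbb{Q}_p^n$ for $\operatorname{Re}(s)>0$ and then invoke Fubini's theorem together with (\ref{g(t)}) and (\ref{u(x,t)}). Your write-up merely makes explicit the majorant and the evaluation of the inner $t$-integral, which the paper leaves implicit.
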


\begin{proof}
We first note that $e^{-st}J(||y||_{p})e^{(\widehat{J}(||\xi||_{p})-1)t}%
\Omega(||\xi||_{p})\in L^{1}((0,\infty)\times\mathbf{%
\mathbb{Q}
}_{p}^{n}\backslash%
\mathbb{Z}
_{p}^{n}\times\mathbf{%
\mathbb{Q}
}_{p}^{n},dtd^{n}\xi d^{n}y)$ for $s\in%
\mathbb{C}
$ with $\operatorname{Re}(s)>0$. The announced formula follows now from
(\ref{g(t)}) and (\ref{u(x,t)}) by using Fubini's Theorem.
\end{proof}

\begin{theorem}
\label{Theorem3} If $-n<\gamma<0$, then $\ P(\{\omega\in\Upsilon:\tau_{%
\mathbb{Z}
_{p}^{n}}(\omega)\mathfrak{<}\infty\})=1$.
\end{theorem}

\begin{proof}
By applying the Laplace transform to (\ref{Volterra_equ}), we have%
\[
F(s)=\frac{G(s)}{1+G(s)}=1-\frac{1}{1+G(s)},
\]
where $F(s)$ and $G(s)$ are the Laplace transforms of $f$ and $g$,
respectively. We understand $F(0)=\lim_{s\rightarrow0}F(s)$ and $G(0)=\lim
_{s\rightarrow0}G(s)$. From $F(0)=\int_{0}^{\infty}f(t)dt=\frac{G(0)}{1+G(0)}%
$, it follows that if $G(0)=\infty$, then $\mathfrak{J(}t,\omega\mathfrak{)}$
is recurrent. Since $G(0)=\int_{0}^{\infty}g(t)dt$ is either a positive number
or infinity, it is sufficient to show that $\lim_{s\rightarrow0}G(s)=\infty$
for $s\in\mathbb{R}_{+}$. For $y\in\mathbf{%
\mathbb{Q}
}_{p}^{n}\backslash%
\mathbb{Z}
_{p}^{n}$ with $||y||_{p}=p^{i},$ $i\in%
\mathbb{N}
\backslash\{0\}$, we have%
\begin{align*}%
{\displaystyle\int\limits_{\mathbb{Z} _{p}^{n}}}
\frac{\chi_{p}\left(  -y\cdot\xi\right)  }{s+1-\widehat{J}(\left\Vert
\xi\right\Vert _{p})}d^{n}\xi & =%
{\displaystyle\sum\limits_{j=0}^{\infty}}
\frac{1}{s+1-\widehat{J}(p^{-j})}%
{\displaystyle\int\limits_{||\xi||_{p}=p^{-j}}}
\chi_{p}\left(  -y\cdot\xi\right)  d^{n}\xi\\
& =%
{\displaystyle\sum\limits_{j=i}^{\infty}}
\frac{p^{-nj}(1-p^{-n})}{s+1-\widehat{J}(p^{-j})}-\frac{p^{-ni}}%
{s+1-\widehat{J}(p^{1-i})},
\end{align*}
and by (\ref{Laplace_T}),%
\begin{align*}
G(s)  & =%
{\displaystyle\sum\limits_{i=1}^{\infty}}
J(p^{i})%
{\displaystyle\int\limits_{\mathbf{||}y\mathbf{||}_{p}=p^{i}}}
\left[
{\displaystyle\sum\limits_{j=i}^{\infty}}
\frac{p^{-nj}(1-p^{-n})}{s+1-\widehat{J}(p^{-j})}-\frac{p^{-ni}}%
{s+1-\widehat{J}(p^{1-i})}\right]  d^{n}y\\
& =%
{\displaystyle\sum\limits_{i=1}^{\infty}}
J(p^{i})\left[
{\displaystyle\sum\limits_{j=i}^{\infty}}
\frac{p^{-nj}(1-p^{-n})}{s+1-\widehat{J}(p^{-j})}-\frac{p^{-ni}}%
{s+1-\widehat{J}(p^{1-i})}\right]  p^{ni}(1-p^{-n})\\
& =(1-p^{-n})%
{\displaystyle\sum\limits_{i=1}^{\infty}}
J(p^{i})\left[  (1-p^{-n})%
{\displaystyle\sum\limits_{j=i}^{\infty}}
\frac{p^{n(i-j)}}{s+1-\widehat{J}(p^{-j})}-\frac{1}{s+1-\widehat{J}(p^{1-i}%
)}\right]  .
\end{align*}
Now, since\ $\lim_{j\rightarrow\infty}1-\widehat{J}(p^{-j})=0$, given any
$s>0$, there exists $j_{0}\left(  s\right)  \in\mathbb{N}$ such that
$1-\widehat{J}(p^{-j})<s$ for $j>$ $j_{0}\left(  s\right)  $. In addition,
$s\rightarrow0^{+}$ implies that $j_{0}\left(  s\right)  \rightarrow\infty$.
By using these observations, we have%
\begin{align*}
G(s)  & \geq(1-p^{-n})J(p)\left[  (1-p^{-n})%
{\displaystyle\sum\limits_{j=1}^{\infty}}
\frac{p^{n(1-j)}}{s+1-\widehat{J}(p^{-j})}-\frac{1}{s+1-\widehat{J}(1)}\right]
\\
& \geq(1-p^{-n})J(p)\left[  \frac{p^{n}\left(  1-p^{-n}\right)  }{2}%
{\displaystyle\sum\limits_{j=1}^{j_{0}\left(  s\right)  }}
\frac{p^{-nj}}{1-\widehat{J}(p^{-j})}-\frac{1}{s+1-\widehat{J}(1)}\right]  ,
\end{align*}
notice that by Remark \ref{Nota_support}, $1-\widehat{J}(1)>0$. Therefore,
\begin{align*}
\lim_{s\rightarrow0^{+}}G(s)  & \geq(1-p^{-n})J(p)\left[  \frac{p^{n}\left(
1-p^{-n}\right)  }{2}%
{\displaystyle\sum\limits_{j=0}^{\infty}}
\frac{p^{-nj}}{1-\widehat{J}(p^{-j})}-\frac{1+\frac{p^{n}\left(
1-p^{-n}\right)  }{2}}{1-\widehat{J}(1)}\right] \\
& =(1-p^{-n})J(p)\left[  \frac{p^{n}}{2}%
{\displaystyle\int\limits_{\mathbb{Z} _{p}^{n}}}
\frac{d^{n}\xi}{1-\widehat{J}(\left\Vert \xi\right\Vert _{p})}-\frac
{1+\frac{p^{n}\left(  1-p^{-n}\right)  }{2}}{1-\widehat{J}(1)}\right]
=\infty,
\end{align*}
cf. Lemma \ref{lemma2A}.
\end{proof}

\bigskip

\end{document}